\newcommand{\eg} {{\em e.g., }}
\newcommand{\ie} {{\em i.e., }}
\newcommand{\system} {{\small\sf{BlinkDB}}}
\newcommand{\systemheader} {{BlinkDB}}
\newcommand{\systeminitalics} {{\it BlinkDB}}
\newcommand{\xref}[1]{\S\ref{#1}}
\newcommand{\eat}[1]{}
\newcommand{\allnotes}[1]{\textit{#1}}
\newcommand{\ion}[1]{\allnotes{\textcolor{blue}{[Ion: #1]}}}
\newcommand{\srm}[1]{\allnotes{\textcolor{red}{[Sam: #1]}}}
\newcommand{\barzan}[1]{\allnotes{\textcolor{green}{[Barzan: #1]}}}
\newcommand{\notepanda}[1]{\allnotes{\textcolor{cyan}{[Panda: #1]}}}
\newcommand{\notesameer}[1]{\allnotes{\textcolor{gray}{[Sameer: #1]}}}
\newcommand{\ion}[1]{}
\newcommand{\srm}[1]{}
\newcommand{\barzan}[1]{}
\newcommand{\notepanda}[1]{}
\newcommand{\notesameer}[1]{}
\newenvironment{itemize*}%
  {\begin{itemize}%
    \setlength{\itemsep}{0pt}%
    \setlength{\parskip}{0pt}}%
  {\end{itemize}}
\newenvironment{enumerate*}%
  {\begin{enumerate}%
    \setlength{\itemsep}{0pt}%
    \setlength{\parskip}{0pt}}%
  {\end{enumerate}}
\newtheorem{theorem}{Theorem}[section]
\newtheorem{lemma}[theorem]{Lemma}
\begin{document}
%

\sloppypar


\title{\systemheader: Queries with Bounded Errors and Bounded Response Times on Very Large Data}



%
%
%
%

\numberofauthors{1} 

\author{
Sameer Agarwal$^+$,~ Aurojit Panda$^+$,~ Barzan Mozafari$^*$,~ Samuel Madden$^*$,~ Ion Stoica$^+$ \\
\\
$^+$UC Berkeley ~~ ~~~~~ ~~ $^*$MIT CSAIL\\
}

%
%


\maketitle
\begin{abstract}

  In this paper, we present \system, a massively parallel,
  sampling-based approximate query engine for running ad-hoc,
  interactive SQL queries on large volumes of data. The key insight that
  \system{} builds on is that one can often make reasonable decisions
  in the absence of perfect answers. For example, reliably detecting a
  malfunctioning server using a distributed collection of system logs
  does not require analyzing every request processed by the
  system. Based on this insight,~\system~allows one to trade-off query
  accuracy for response time, enabling interactive queries over
  massive data by running queries on data samples and presenting
  results annotated with meaningful error bars.  To achieve
  this,~\system~uses two key ideas that differentiate it from previous
  work in this area: $(1)$ an adaptive optimization framework that
  builds and maintains a set of multi-dimensional, multi-resolution
  samples from original data over time, and $(2)$ a dynamic sample
  selection strategy that selects an appropriately sized sample based
  on a query's accuracy and/or response time
  requirements.\eat{\system~supports ad-hoc queries over massive
    amounts of data and makes no assumptions about the underlying data
    distribution.}  We have built an open-source version of {\system}
  and validated its effectiveness using the well-known TPC-H benchmark
  as well as a real-world analytic workload derived from Conviva
  Inc~\cite{Conviva}. Our experiments on a $100$ node cluster show
  that \system{} can answer a wide range of queries from a real-world
  query trace on up to $17$ TBs of data in less than $2$ seconds (over
  $100\times$ faster than Hive), within an error of $2-10\%$.
  \end{abstract}

\section{Introduction}\label{introduction}

Modern data analytics applications involve computing
aggregates over a large number of records to ``roll up''
web clicks, online transactions, content downloads, phone calls,
and other features along a variety of different dimensions,
including demographics, content type, region, and so on.
Traditionally, such queries have been answered via sequential scans of
large fractions of a database to compute the appropriate statistics.
Increasingly, however, these applications demand near real-time response
rates.  Examples include (i) in a search engine, recomputing what ad(s) on websites to show to particular
classes of users as content and product popularity changes on a daily or hourly basis (e.g., based on trends on social networks like Twitter or real time search histories)
(ii) in a financial trading firm, quickly comparing the prices of securities to fine-grained historical averages to 
determine items that are under or over valued, or (iii) in a web service, determining the subset of users who 
are affected by an outage or are experiencing poor
quality of service based on the service provider or region.

\if{0}
Modern web services generate huge amounts of data that
ultimately derives its value from being analyzed in a timely
manner. Companies and their analysts often explore this
data to improve their products, increase their retention rate
and customers engagement, and diagnose problems in their
service.  Examples of such exploratory queries include:

\vspace{.05in}
\noindent{\textbf{Root cause analysis and problem diagnosis:}} 
Imagine a subset of users of a video site (\eg Netflix) experience
quality issues, such as a high level of buffering or long start-up
times. Diagnosing such problems needs to be done quickly to avoid lost
revenue. The causes can be varied: an ISP or edge cache in a certain
geographic region may be overloaded, a new OS or player release may be
buggy, or a particular piece of content may be corrupt. Diagnosing
such problems requires segmenting data across tens of dimensions to
find the particular attributes (\eg client OS, browser, firmware,
device, geolocation, ISP, CDN, content) that best characterize the
users experiencing the problem. 

\vspace{.05in} 
\noindent {\textbf{Advertising and Marketing:}} Consider a 
business that wants to adapt its policies and decisions in near real
time to maximize its revenue. This might again involve aggregating
across multiple dimensions to understand how an ad performs given a
particular group of users, content, site, and time of the
day. Performing such analysis quickly is essential, especially when
there is a change in the environment, \eg new ads, new content, new
page layout. Indeed, it can lead to a material difference if one is
able to re-optimize the ad placement every minute, rather than every
day or week.

\vspace{.05in} 
\noindent {\textbf{A/B testing:}} Consider an online service company that
  aims to optimize its business by improving the retention of it
  users.
  Often this is done by using A/B testing to experiment with anything
  from new products to slight changes in the web page layout, format,
  or colors.\eat{\footnote{A much publicized example is Google's
      testing with $41$ shades of blue for their home
      page~\cite{nytimes-ab-testing}.}} The number of combinations and
  changes that one can test is daunting, even for a company as large
  Google.  Furthermore, such tests need to be conducted carefully as
  they may negatively impact the user experience. 
  Again, in this usage scenario, it is more important to identify the
  trend fast, rather than accurately characterize the impact on every
  user.


\vspace{.1in}
\fi

In these and many other analytic applications, queries are unpredictable (because the exact problem, or query is not known in advance)
and  quick response time
is essential as data is changing quickly, and the potential profit (or loss in profit in the case of service outages)
is proportional to response time. 
Unfortunately, the conventional way of answering such queries requires scanning the entirety of several terabytes of data.
This can be quite inefficient.  For example, computing a simple average
over $10$ terabytes of data stored on $100$ machines can take in the
order of $30-45$ minutes on Hadoop if the data is striped on disks,
and up to $5-10$ minutes even if the entire data is cached in
memory. This is unacceptable for rapid problem diagnosis, and
frustrating even for exploratory analysis.  As a result, users often
employ ad-hoc heuristics to obtain faster response times, such as
selecting small slices of data (\eg an hour) or arbitrarily sampling
the data~\cite{minitables, qubole}. These efforts suggest that, at least in
many analytic applications, users are willing to forgo accuracy for achieving
better response times.

  In this paper, we introduce \system, a new distributed parallel
  approximate query-processing framework than runs on Hive/Hadoop~\cite{hive} as well as Shark~\cite{shark} (i.e., ``Hive on Spark~\cite{spark}'', which supports
caching inputs and intermediate data).  \system{}
  allows users to pose SQL-based aggregation queries over stored
  data, along with response time or error bound constraints.  Queries
  over multiple terabytes of data can be answered in seconds,
  accompanied by meaningful error bounds relative to the answer
that would be obtained if the query ran on the full data.
  The basic approach taken by \system~ is to precompute and maintain
  a carefully chosen set of random samples of the user's data,
  and then select the best sample(s) at runtime, for answering the
  query while providing error bounds using
  statistical sampling theory.  
  
While uniform samples provide a reasonable approximation for uniformly or near-uniformly distributed data, they work poorly for skewed distributions (\eg exponential or zipfian).
In particular, estimators over infrequent subgroups (\eg smartphone users in Berkeley, CA compared to New York, NY) converge slower when using uniform samples, since a larger fraction of the data needs to be scanned to produce high-confidence approximations. Furthermore, uniform samples may not contain instances of certain subgroups, leading to missing rows in the final output of queries. 
Instead, {\it stratified} or {\it biased} samples~\cite{sampling-book}, which over-represent the frequency of rare values in a sample,
 better 
represent rare subgroups in such skewed datasets.   
Therefore \system~maintains both
  a set of uniform samples, and a set of
  stratified  samples over different combinations
  of attributes. As a result,  when querying rare subgroups,
   \system~ (i) provides faster-converging estimates (\ie tighter {\it approximation errors}), and thus lower processing times, 
   compared to uniform samples~\cite{online-agg-mr}  and,  (ii) significantly reduces the number of
   missing subgroups in the query results (\ie {\it subset error}~\cite{subset-error}), enabling a wider range of applications (\eg more complex joins that 
   would be not be possible otherwise~\cite{Chaudhuri:1999}). 

  However, maintaining stratified samples over all combinations of
  attributes is impractical. Conversely, only computing stratified
  samples on columns used in past queries limits the ability 
  to handle new ad-hoc queries.  Therefore, we formulate
  the problem of sample creation as an optimization problem.  Given a
  collection of past query {\it templates} (query templates contain the set
  of columns appearing in {\tt WHERE} and {\tt GROUP BY} clauses
  without specific values for constants) and their historical
  frequencies, we choose a collection of stratified samples with total
  storage costs below some user configurable storage threshold.  These
  samples are designed to efficiently answer any instantiation of
  past query templates, and to provide good coverage for future
  queries and unseen query templates. In this paper,
  we refer to these stratified samples, constructed over different sets
  of columns (dimensions), as _multi-dimensional_ samples.
  
  In addition to multi-dimensional samples, \system~ also maintains
  _multi-resolution_ samples. For each multi-dimensional sample,
  we maintain several samples of progressively larger sizes (that we
  call multi-resolution samples). Given a query,~\system~picks
  the best sample to use at runtime.  Having samples of different
  sizes allows us to efficiently answer queries of varying complexity
  with different accuracy (or time) bounds, while minimizing the
  response-time (or error).  A single sample, would hinder our ability
  to provide as fine-grained a trade-off between speed and
  accuracy. Finally, when the data distribution or the query load
  changes, \system~ refines the solution while minimizing
  the number of old samples that need to be discarded or new samples
  that need to be generated.

Our approach is substantially different from related sampling-based
approximate query answering systems.  One line of related work is {\it
Online Aggregation}~\cite{control, online-agg, online-agg-joins} (OLA) and its 
extensions~\cite{online-agg-mr, db-online, ola-mr-pansare}.  Unlike OLA,
pre-computation and maintenance of samples allows \system~ to store each
sample on disk or memory in a way that facilitates efficient query processing
(\eg clustered by primary key and/or other attributes),
whereas online aggregation has to access the data in random order to
provide its statistical error guarantees. Additionally, unlike OLA, \system{} has
prior knowledge of the sample size(s) on which the query runs (based on 
its response time or accuracy requirements). This additional information
both helps us better assign cluster resources (\ie degree of parallelism and
input disk/memory locality), and
better leverage a number of standard distributed query optimization
techniques~\cite{join-comparison-in-mr}. There is some related work that
proposes pre-computing (sometimes stratified) samples of input data
based on past query workload characteristics~\cite{surajit-optimized-stratified,
babcock-dynamic,sciborq}. As noted above, \system~computes both
multi-dimensional (\ie samples over multiple attributes) and multi-resolution
sampling (\ie samples at different granularities), which no prior system
does. 
  We discuss related work in more detail in~\xref{related}.



In summary, we make the following contributions:
\begin{itemize*}\vspace{-.1in}
\item We develop a multi-dimensional, multi-granular stratified
  sampling strategy that provides faster convergence, minimizes missing results in the output (\ie subset error), and
 provides error/latency guarantees for ad-hoc workloads. (\xref{sec:stratified-samples}, \xref{sec:convergence-exp}) 

\item We
  cast the decision of what stratified samples to build as an optimization
  problem that takes into account: (i) the skew of the data
  distribution, (ii) query templates, and (iii) the storage
  overhead of each sample. (\xref{sec:optimal-view-creation}, \xref{sec:multi-dimension-sampling-exp})

\item We develop a run-time dynamic sample selection strategy that
  uses multiple smaller samples to quickly estimate query selectivity
  and choose the best samples for satisfying the response time and
  error guarantees. (\xref{sec:select-samplefamily}, \xref{sec:time-accuracy})
\vspace{-.1in}
\end{itemize*}

\system{} is a massively
  parallel query processing system that incorporates these ideas.  
We validate the effectiveness of
\system's design and implementation on a $100$ node cluster, using
both the TPC-H benchmarks and a real-world workload derived from
Conviva Inc~\cite{Conviva}. Our experiments show that \system~
can answer a range of queries within $2$ seconds on $17$ TB of data
within 90-98\% accuracy. Our results show that our multi-dimensional sampling approach, versus
just using single dimensional samples (as was done in previous work) can improve
query response times by up to three orders of magnitude and are further a factor of $2\times$ better than approaches
that apply online sampling at query time.
Finally, \system~is open
source\footnote{\url{http://blinkdb.org}} and several
on-line service companies have expressed interest in using it.

Next, we describe the architecture and the major components of \system.

\section{System Overview}
\label{sec:overview}


As it is built on top of Hive~\cite{hive}, {\system} supports a hybrid
programming model that allows users to write  SQL-style declarative queries
with custom user defined functions (UDFs).  In addition, for aggregation 
queries (\ie \texttt{AVG, SUM, PERCENTILE}  etc.), users can annotate
queries with either a maximum error or maximum execution time constraint.
Based on these constraints, {\system} selects an appropriately sized data
sample at runtime on which the query operates (see ~\xref{sec:example}
below for an example). Specifically, to specify an error bound, the user 
supplies a bound of the form $(\epsilon,C)$, indicating that the query should
return an answer that is within $\pm\epsilon$ of the true answer with a confidence
$C$. As an example, suppose we have a table {\it Sessions}, storing
the sessions of users browsing a media website with five columns:
\emph{Session}, \emph{Genre}, \emph{OS} (running on the user's
device), \emph{City}, and \emph{URL} (of the site visited by the
user). Then the query:

{\small
\begin{verbatim}
SELECT COUNT(*)
FROM Sessions
WHERE Genre = `western'
GROUP BY OS
ERROR WITHIN 10% AT CONFIDENCE 95%
\end{verbatim}
}
\noindent
will return the number of sessions looking at media from the
``western'' \emph{Genre} for each OS to within a relative error of
$\pm 10\%$ within a $95\%$ confidence interval. Users can also specify absolute
 errors. 
 Alternatively, users can instead request a time bound. 
For example, the query:

{\small
\begin{verbatim}
SELECT COUNT(*), RELATIVE ERROR AT 95% CONFIDENCE
FROM Sessions
WHERE Genre = `western'
GROUP BY OS
WITHIN 5 SECONDS
\end{verbatim}
}
\noindent
will return with the most accurate answer within $5$ seconds, and 
will report the estimated count along with an estimate of the relative
 error at $95\%$ confidence. This enables a user to perform
rapid exploratory analysis on massive amounts of data, wherein she can
progressively tweak the query bounds until the desired accuracy is achieved.

\subsection{Settings and Assumptions}
\label{sec:assumptions}
In this section, we discuss several assumptions we made in designing \system{}.

\noindent
\textbf{Queries with Joins.} Currently, \system~ supports two types of joins. (i) Arbitrary joins
\footnote{Note that this does not contradict the  theoretical results on the futility of uniform
  sampling for join queries~\cite{Chaudhuri:1999}, since \system~employs stratified samples for joins.}
are 
allowed (self-joins or joining two tables) 
as long as there is a stratified sample on one of the join tables that 
contains the join key in its column-set\footnote{In general, \system{} supports arbitrary $k$-way joins (i.e. joins between $k$ tables)  as long as there are at least $k-1$ stratified samples (each corresponding to one of the join operands) that all
contain the join key in their column-set.}.
(ii) In the absence of any suitable stratified sample, the join is still allowed as long as one of the two tables 
fits in memory (since \system~ does not sample tables that fit in memory). 
The latter is, however, more common in practice as data warehouses typically consist of one large de-normalized
``fact'' table (\eg ad impressions, click streams, pages served) that may need 
to be joined with other ``dimension'' tables using
foreign-keys. Dimension tables (\eg representing customers, media, or locations) 
are often small enough to fit in the aggregate memory of cluster nodes.


\noindent
\textbf{Workload Characteristics.} Since our workload is targeted at
ad-hoc queries, rather than assuming that exact queries are known a priori, 
we assume that the \emph{query templates} (\ie the set of
columns used in {\tt WHERE} and {\tt GROUP-BY} clauses) remain fairly
stable over time. We make use of this assumption when choosing which
samples to create.  This assumption has been empirically observed in a
variety of real-world production workloads~\cite{rope,
  recurring-scope} and is also true of the query trace we use for our
primary evaluation (a $2$-year query trace from Conviva Inc). We
however do not assume any prior knowledge of the specific values or
predicates used in these clauses. 
Note that, although {\system}
creates a set of stratified samples based on past query templates, at
runtime, it can still use the set of available samples to answer
any query, even if it is not from one of the historical templates.
In Section~\ref{sec:optimal-view-creation}, we show that 
our optimization framework takes into account the distribution skew of the underlying data in addition
to templates,
allowing it to perform well even when presented with previously unseen templates. 

\noindent
\textbf{Closed-Form Aggregates.}
In this paper, we focus on a small set of aggregation operators: {\tt COUNT},
{\tt SUM}, {\tt MEAN}, {\tt MEDIAN/QUANTILE}.  We estimate error of these functions using
standard estimates of closed-form error (see Table~\ref{tab:closedform}). 
However, using techniques proposed in~\cite{kai-paper}, closed-form estimates can be
easily derived for 
any combination of these basic
aggregates as well as any algebraic function that is \emph{mean-like and
asymptotically normal} (see~\cite{kai-paper} for formal definitions).

\noindent
\textbf{Offline Sampling.} 
{\system} computes samples of input data and reuses them across many
queries.  One challenge with any system like \system{} based on
offline sampling is that there is a small but non-zero probability
that a given sample may be non-representative of the true data, e.g.,
that it will substantially over- or under-represent the frequency of
some value in an attribute compared to the actual distribution, such
that a particular query $Q$ may not satisfy the user-specified error
target.  Furthermore, because we do not generate new samples for each
query, no matter how many times $Q$ is asked, the error target will
not be met, meaning the system can fail to meet user specified
confidence bounds for $Q$.  Were we to generate a new sample for every
query (i.e., perform {\it online} sampling), our confidence bounds
would hold, because our error estimates ensure that the probability of
such non-representative events is proportional to the user-specified
confidence bound.  Unfortunately, such re-sampling is expensive and
would significantly impact query latency.  Instead, our solution is to
periodically replace samples with new ones in the background, as
described in~\xref{sec:sample-maintenance}.

\subsection{Architecture}
\label{sec:arch}

Fig.~\ref{arch} shows the overall architecture of \system.
{\system} builds on the Apache Hive framework~\cite{hive} and
adds two major components to it: (1) an offline sampling module that creates and
maintains samples over time, and (2) a run-time sample selection module that creates
an {\it Error-Latency Profile (ELP)} for ad-hoc queries. The ELP characterizes
the rate at which the error (or response time) decreases (or increases) as
 the size of the sample on which the query operates increases. This is used to
select a sample that best
satisfies the user's constraints. {\system} augments the query parser,
optimizer, and a number of aggregation operators to allow queries to specify
constraints for accuracy, or execution time.

\begin{figure}[htbp]
\begin{center}
\includegraphics*[width=2.5in]{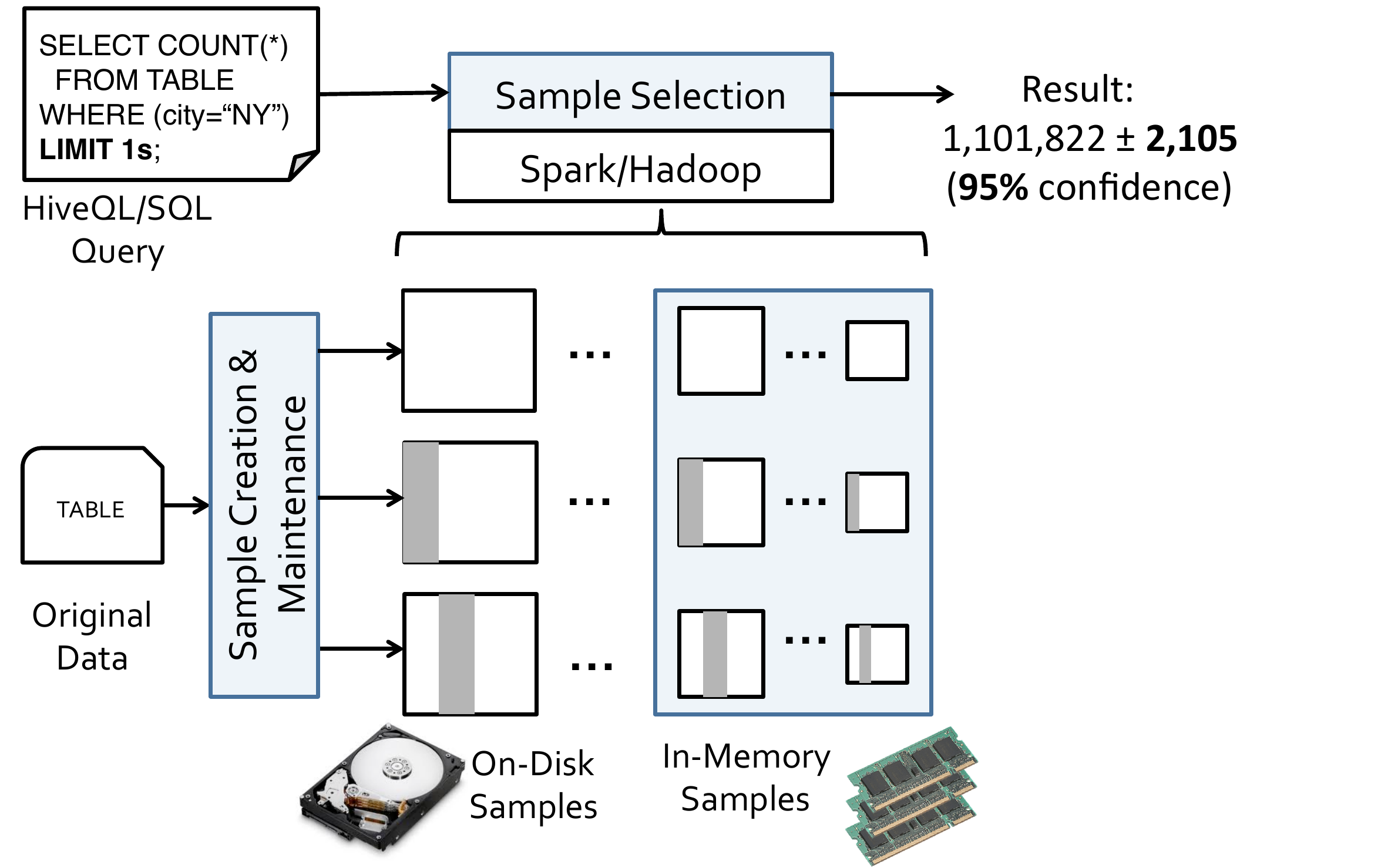}
\vspace{-.15in}
\caption{{\system} architecture.}
\label{arch}
\end{center}
\vspace{-.2in}
\end{figure}


\subsubsection{Offline Sample Creation and Maintenance}
\label{sec:sample-creation}

This component is responsible for creating and maintaining a set of
uniform and stratified samples. We use uniform samples over the entire
dataset to handle queries on groups of columns with relatively uniform
distributions, and stratified samples (on one or more columns) to
handle queries on groups of columns with less uniform distributions. This
component consists of three sub-components:

\begin{asparaenum}

\item \textbf{Offline Sample Creation.}  Based on statistics collected
  from the data (\eg average row sizes, key skews, column histograms etc.), and
  historic query templates, {\system} computes a set of uniform samples
  and multiple sets of stratified samples from the underlying data.
  We rely on an optimization framework described in \xref{sec:optimal-view-creation}. 
  Intuitively, the optimization framework builds
  stratified samples over column(s) that are (a) most useful for the
  query templates in the workload, and (b) most skewed, \ie they
  have long-tailed distributions where rare values are more likely to
  be excluded by a uniform sample.
  
\item \textbf{Sample Maintenance.} As new data arrives, we
  periodically update the initial set of samples. Our update strategy
  is designed to minimize performance overhead and
  avoid service interruption.  A monitoring module observes
  overall system performance, detecting any significant changes
  in data distribution (or workload), and triggers periodic
  sample replacement, and updates, deletions, or creations of new
  samples.

\item \textbf{Storage optimization.} In addition to caching samples in
memory, to maximize disk throughput, we
  partition each sample into many small files, and leverage the
  block distribution strategy of HDFS~\cite{hdfs} to spread those files across
  the nodes in a cluster. Additionally, we optimize the storage overhead,
  by recursively building larger samples as a union of smaller samples that are
  built on the same set of columns.



\end{asparaenum}


\subsubsection{Run-time Sample Selection} 
\label{sec:sample-slection}

Given a query, we select an optimal sample at runtime so as to
meet its accuracy or response time constraints.
We do this by dynamically running
the query on smaller samples to estimate the query's
selectivity, error rate, and response time, and then extrapolate
to a sample size that will satisfy user-specified error or response time 
goals. 
\xref{solution:selection} describes this procedure in
detail.

\subsection{An Example}
\label{sec:example}

To illustrate how {\system}~operates, consider a table derived from a log
of downloads by users from a media website,
as shown in Figure~\ref{fig:startified-samples-example}. The
table consists of five columns: \emph{Session}, \emph{Genre},
\emph{OS}, \emph{City}, and \emph{URL}.

Assume we know the query templates in the workload, and that
$30\%$ of the queries had \emph{City} in their {\tt WHERE/GROUP BY}
clause, $25\%$ of the queries had \emph{Genre}~{\tt AND}~\emph{City}
in their {\tt WHERE/GROUP BY} clause, and so on.

\begin{figure}[htbp]
\begin{center}
\includegraphics*[width=225pt]{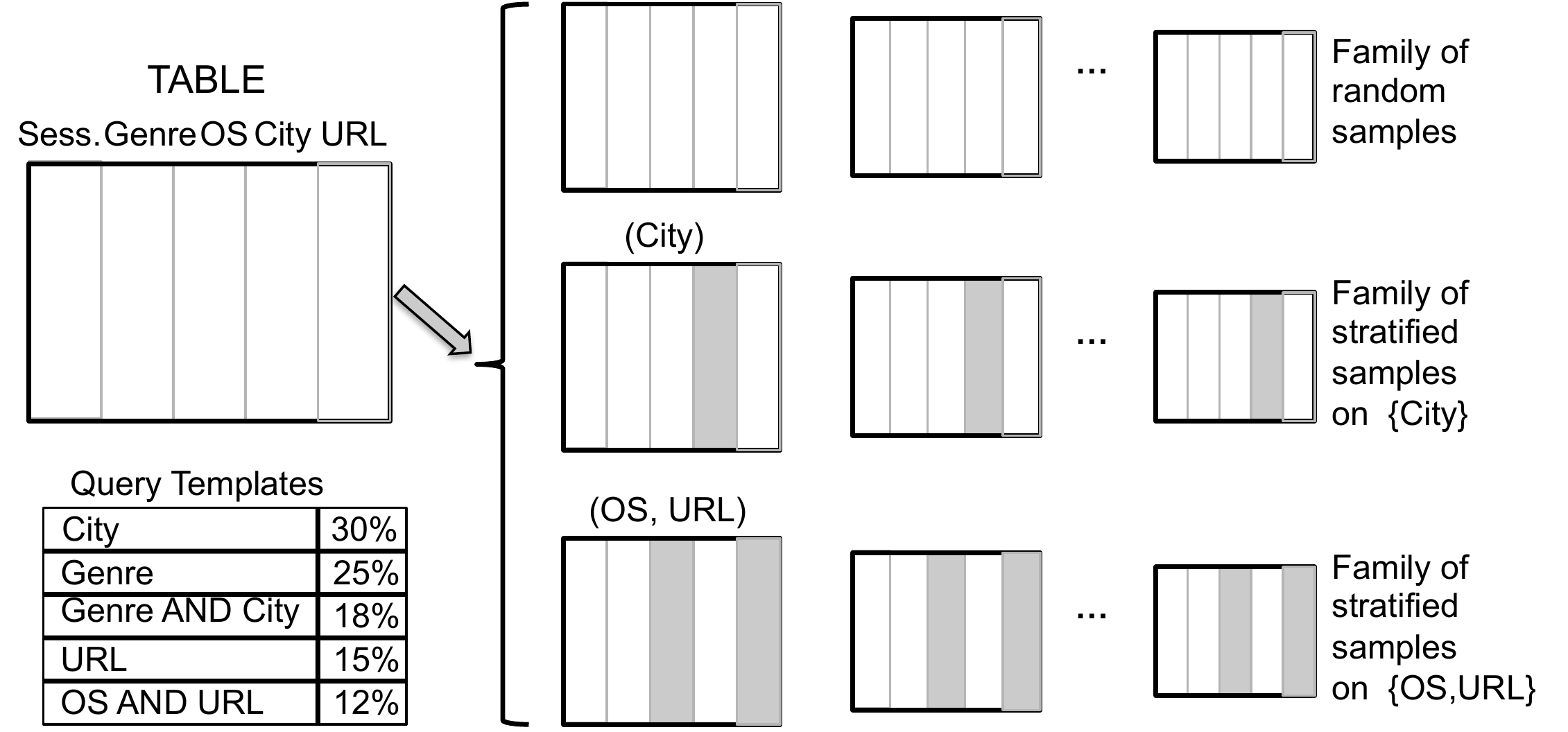}
\caption{An example showing the samples for a table with five columns,
  and a given query workload.}
\label{fig:startified-samples-example}
\end{center}
\vspace{-.2in}
\end{figure}

Given a storage budget, {\system} creates several multi-dimen\-sion\-al
and multi-resolution samples based on past query templates and
the data distribution. These samples are organized in \emph{sample
families}, where each family contains multiple samples of different
granularities. One family consists of uniform samples, while the other
families consist of stratified samples biased on a given set of
columns. In our example, {\system} decides to create two sample
families of stratified samples: one on \emph{City}, and another one on
$(\emph{OS}, \emph{URL})$. Note that despite \emph{Genre} being a 
frequently queried column, we
do not create a stratified sample on this column.  This could be due
to storage constraint or because \emph{Genre} is uniformly
distributed, such that queries that only use this column are
already well served by the uniform sample. Similarly, {\system} does
not create stratified samples on columns
$(\emph{Genre},~\emph{City})$, in this case because queries on these
columns are well served by the stratified samples on the \emph{City}
column. {\system} also creates several instances of each sample
family, each with a different size, or {\it resolution}.  For
instance, {\system} may build three biased samples on
columns$(\emph{OS},~\emph{URL})$ with $1$M, $2$M, and $4$M tuples
respectively.  In \xref{sec:optimal-view-creation}, we present
an algorithm for optimally picking these sample families.

For every query, at run time, {\system} selects the appropriate sample
family and the appropriate sample resolution to answer the query based on the user
specified error or response time bounds.  In
general, the columns in the {\tt WHERE/GROUP BY} of a query may not
exactly match any of the existing stratified samples. For example,
consider a query, $Q$, whose {\tt WHERE} clause is
(\emph{OS}$=$\emph{'Win7'} {\tt AND} \emph{City}$=$\emph{'NY'} {\tt
  AND} \emph{URL}$=$'\url{www.cnn.com}'). In this case, it is not
clear which sample family to use. To get around this problem,
{\system} runs $Q$ on the smallest resolutions of other candidate
sample families, and uses these results to select the appropriate
sample, as described in detail in ~\xref{solution:selection}.

\section{Sample Creation}\label{solution:sampling}

As described in~\xref{sec:sample-creation}, \system~creates a
set of multi-dimensional, multi-resolution samples to accurately and
quickly answer ad-hoc queries. In this section, we describe sample creation in
detail.  First, in~\xref{sec:stratified-samples}, we
discuss the creation of a sample family, a set
of stratified samples of different sizes, on the same set of
columns. In particular, we show how the choice of stratified samples
impact the query's accuracy and response time, and evaluate the
overhead for skewed distributions. Next, in~\xref{sec:optimal-view-creation} we formulate and solve an
optimization problem to decide on the sets of columns on which we build sample
families.  

\subsection{Multi-resolution Stratified Samples}
\label{sec:stratified-samples}

In this section, we describe our techniques for constructing a family
of stratified samples from input tables.  We describe how we maintain
these samples in~\xref{sec:sample-maintenance}. Table~\ref{tab:notations}
contains the notation used in the rest of this section.

\begin{table}
\begin{center}
{\small
  \begin{tabular}{| l | l |}
    \hline
    {\bf Notation} & {\bf Description} \\ \hline\hline
    $T$ & fact (original) table \\ \hline
    $\phi$ & set of columns in $T$ \\ \hline
    $R(p)$ & random sample of $T$, where each row in $T$ \\
              & is selected with probability $p$ \\ \hline
   $S(\phi, K)$ & stratified sample associated to $\phi$, where \\
                       & frequency of every value $x$ in $\phi$ is capped by $K$ \\ \hline
    $SFam(\phi)$ & family (sequence) of multi-dimensional multi-\\
                          & resolution stratified samples associated with $\phi$ \\ \hline
    $F(\phi, S, x)$ & frequency of value $x$ in set of columns $\phi$ in \\
                           & sample/table $S$\\ \hline
  \end{tabular}
}
\end{center}
\vspace{-.1in}
\caption{Notation in \xref{sec:stratified-samples}}
\vspace{-.1in}
\label{tab:notations}
\end{table}

Queries on uniform samples  converge quickly to the true answer, when the original data is distributed uniformly
or near-uniformly. This convergence, is however, much slower for uniform samples over 
highly skewed distributions (\eg exponential or Zipfian) because a much larger fraction of the entire data set needs to be scanned to produce high-confidence estimates on infrequent values.
A second, perhaps more important, problem is that uniform samples may not contain any instances of certain subgroups, leading to missing rows in the final output of queries.
The
standard approach for dealing with such distributions is to use
stratified sampling~\cite{sampling-book}, which ensures that rare subgroups are {\it sufficiently} represented
in such skewed datasets.  This both provides faster convergence of answer estimates and avoids missing subgroups in results. In this
section, we describe the use of stratified sampling in \system. 

Let $\phi = \{c_1, c_2, \ldots, c_k\}$ be a subset of columns in the
original table, $T$. For any such subset we define a \emph{sample family}
as a sequence of stratified samples over $\phi$
(see Table~\ref{tab:notations} for our notations):

\vskip -0.1in
\begin{equation}
\label{eq:sample-family-def}
SFam(\phi) = \{S(\phi, K_i) \mid 0 \leq i < m \},
\end{equation}

\noindent
where $m$ is the number of samples in the family.  By maintaining
multiple stratified samples for the same column subset $\phi$ we allow
a finer granularity tradeoff between query accuracy and response
time. In the remainder of this paper, we use the term ``set'', instead
of ``subset'' (of columns), for brevity.
In~\xref{sec:optimal-view-creation}, we describe how to select
the sets of columns on which sample families are built.

\begin{figure}[htbp]
\begin{center}
\vspace{-.1in}
\includegraphics*[width=225pt]{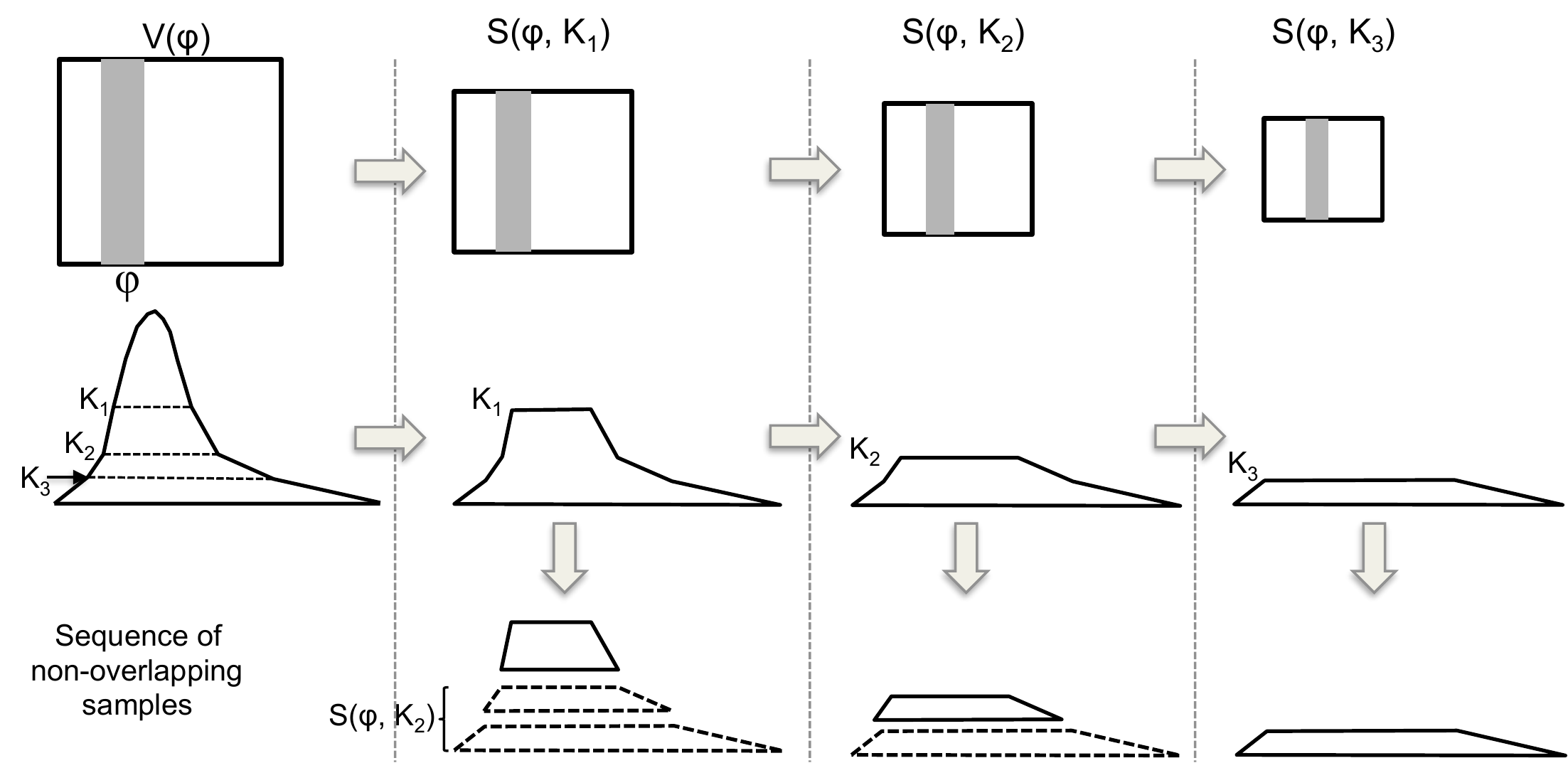}
\caption{Example of stratified samples associated with a set
  of columns, $\phi$.}
\label{fig:stratified-sample}
\end{center}
\vspace{-.1in}
\end{figure}

A stratified sample $S(\phi, K_i)$ on the set of columns, $\phi$, caps
the frequency of every value $x$ in $\phi$ to $K_i$.\footnote{Although
  stratification is done on column set $\phi$, the sample that is
  stored contains all of the columns from the original table.} More
precisely, consider tuple $x = <x_1, x_2, \ldots, x_k>$, where $x_i$
is a value in column $c_i$, and let $F(\phi, T, x)$ be the frequency
of $x$ in column set $\phi$ in the original table, $T$. If $F(\phi, T,
x) \leq K_i$, then $S(\phi, K_i)$ contains all rows containing $x$ in
$T$. Otherwise, if $F(\phi, T, x) > K_i$, then $S(\phi, K_i)$ contains
$K_i$ randomly chosen rows from $T$ that contain $x$.

Figure~\ref{fig:stratified-sample} shows a sample family associated
with column set $\phi$. There are three stratified samples $S(\phi,
K_1)$, $S(\phi, K_2)$, and $S(\phi, K_3)$, respectively, where $K_1$
is the largest sample, and $K_3$ the smallest. Note that since each
sample is a subset of a bigger sample, in practice there is no need to
independently allocate storage for each sample. Instead, we can
construct smaller samples from the larger ones, and thus need an amount
of storage equivalent to maintaining only the largest sample. This
way, in our example we only need storage for the sample corresponding
to $K_1$, modulo the metadata required to maintain the smaller
samples.

  Each stratified sample $S(\phi, K_i)$ is stored sequentially sorted
  according to the order of columns in $\phi$. Thus, the records with the
  same or consecutive $x$ values are stored contiguously on the disk, which, as we will see,
  significantly improves the execution times or range of the queries on the set of
  columns $\phi$.

 Consider query $Q$ whose {\tt WHERE} or {\tt GROUP BY} clause
  contains $(\phi = x)$, and assume we use $S(\phi, K)$ to answer this
  query. If $F(\phi, S(\phi, K), x) < K$, the answer is exact as the
  sample contains all rows from the original table. On the other hand,
  if $F(\phi, S(\phi, K), x) > K$, we answer $Q$ based on $K$ random
  rows in the original table. For the basic aggregate operators {\tt
    AVG}, {\tt SUM}, {\tt COUNT}, and {\tt QUANTILE}, $K$ directly
  determines the error of $Q$'s result. In particular, for these
  aggregate operators, the standard deviation is inversely
  proportional to $\sqrt{K}$, as shown in Table~\ref{tab:closedform}.

  In this paper, we choose the samples in a family so that they have
  exponentially decreasing sizes.  In particular, $K_i =
  \lfloor K_1/c^{i} \rfloor$ for $(1 \leq i \leq m)$, and $m = \lfloor
  {\log}_c K_1 \rfloor$. Thus, the cap of samples in the sequence
  decreases by factor $c$.

\noindent
\textbf{Properties.} A natural question is how ``good'' is a sample
family, $SFam(\phi)$, given a specific query, $Q$, that executes on
column set $\phi$. In particular, let $S(\phi, K^{opt})$ be the
smallest possible stratified sample on $\phi$ that satisfies the error
or response time constraints of $Q$. Since $SFam(\phi)$ contains only
a finite number of samples, $S(\phi, K^{opt})$ is not guaranteed to be
among those samples. Assume $K_1 \geq K^{opt} \geq \lfloor K_1/c^m
\rfloor$, and let $S(\phi, K')$ be the closest sample in $SFam(\phi)$
that satisfies $Q$'s constraints. Then we would like that the $Q$'s
performance when running on $S(\phi, K^{opt})$ to be as close as
possible to $Q$'s performance when running on the optimal-sized
sample, $S(\phi, K^{opt})$. Then, for $K^{opt} \gg c$ the following
two properties hold (see Appendix \ref{sec:analysis} for proofs):
\vspace{-0.1in}
\begin{enumerate*}
\item For a query with response time constraints, the response time of
  the query running on $S(\phi, K')$ is within a factor of $c$ of the
  response time of the query running on the optimal-sized sample,
  $S(\phi, K^{opt})$.

\item For a query with error constraints, the standard deviation of
  the query running on $S(\phi, K')$ is within a factor of $\sqrt{c}$
  of the response time of the query running on $S(\phi, K^{opt})$.
\end{enumerate*}
\vspace{-0.1in}

\noindent
\textbf{Storage overhead.} Another consideration is the overhead associated with
maintaining these samples, especially for heavy-tailed distributions. In
Appendix \ref{sec:analysis} we provide numerical results for a Zipf
distribution, one of the most common heavy-tailed
distributions. Consider a table with $1$ billion tuples and a column
set with a Zipf distribution with an exponent of $1.5$. Then, the
storage required by a family of samples $S(\phi, K)$ is only $2.4\%$
of the original table for $K_0=10^4$, $5.2\%$ for $K_0=10^5$, and
$11.4\%$ for $K_0=10^6$.

These results are consistent with real-world data from Conviva Inc,
where for $K_0 = 10^5$, the overhead incurred for sample families on popular
columns like city, customer, autonomous system number (ASN) are all
less than $10\%$.

\subsection{Optimization Framework}
\label{sec:optimal-view-creation}

We now describe the optimization framework we developed to select subsets
of columns on which to build sample families.  Unlike prior work which
focuses on single-column stratified
samples~\cite{babcock-dynamic}, \system{} creates multi-dimensional
(\ie multi-column) stratified samples.  Having stratified samples on
multiple columns that are frequently queried together
can lead to significant improvements in
both query accuracy and latency, especially when the
set of columns have a skewed joint distribution.
However, these samples lead to an increase in the
storage overhead because (1) samples on multiple columns can be
larger than single-column samples since multiple columns
often contains more unique values than individual columns, 
and (2) there are an exponential number of subsets of columns, all of which may not fit
in our storage budget. As a result, we need to be
careful in choosing the set of columns on which to build
stratified samples.  Hence, we formulate the trade off between
storage and query accuracy/performance as an optimization problem,
described next.

\subsubsection{Problem Formulation}
\label{sec:formulation}
The optimization problem takes three factors into account in determining
the sets of columns on which stratified samples should be built: the \emph{non-uniformity/skew 
of the data}, \emph{workload
  characteristics}, and the \emph{storage cost of samples}. 

\vspace{.1in} 
\noindent 
\textbf{Non-uniformity (skew) of the data.}  Intuitively, the greater the
skew  for a set of columns, the more important it is to have a
stratified sample on those columns. If there is no skew, the uniform
sample and stratified sample will be identical.
Formally, for a subset of columns $\phi$ in table $T$, let $D(\phi)$
denote the set of all distinct values appearing in $\phi$. Recall from
Table~\ref{tab:notations} that $F(\phi, T, v)$ is the frequency of
value $v$ in $\phi$.
Let $\Delta(\phi)$ be a non-uniformity metric on the distribution of
the values in $\phi$. The higher the non-uniformity in $\phi$'s
distribution the higher the value of $\Delta(\phi)$. When $\phi$'s
distribution is uniform (\ie when $F(\phi, T,
v)=\frac{|D(\phi)|}{|T|}$ for $v\in D(\phi)$),  
$\Delta(\phi)=0$.  
In general, $\Delta$ could be any metric of the distribution's skew (e.g., \emph{kurtosis}). 
In this paper, for the sake of simplicity, we use a more intuitive notion of non-uniformity, defined as:
\begin{displaymath}
\Delta(\phi)=|\{v\in D(\phi) | F(\phi,T,v)<K\}|
\end{displaymath}
where $K$ represents the cap corresponding to the largest sample in
the family, $S(\phi, K)$ (see~\xref{sec:stratified-samples}).
 Intuitively, this metric
captures the length of $\phi$'s tail, i.e., the number of unique values
in $\phi$ whose frequencies are less than $K$. While the
 rest of this paper 
uses this metric, our framework allows other metrics to be used.

\vspace{.1in}
\noindent
\textbf{Workload.} The utility of a stratified sample increases if the
set of columns it is biased on occur together frequently in queries.  One
way to estimate such co-occurrence is to use the frequency with which columns
have appeared together in past queries. 
However, we wish to avoid over-fitting to a particular set of queries
  since future queries may use different columns. Hence,
   we  use a \emph{query workload}
  defined as a set of $m$ query
templates and their weights: 
$$\langle \phi_{1}^T,
w_{1}\rangle, \cdots, \langle \phi_{m}^T, w_{m}\rangle$$ where
$0<w_{i}\leq 1$ is the weight (normalized frequency or importance) of
the $i$'th query template and $\phi_{i}^T$ is the set of columns
appearing in the $i$'th template's {\tt WHERE} and {\tt GROUP BY}
clauses\footnote{Here, {\tt HAVING} clauses are treated as columns in
  the {\tt WHERE} clauses.}.
  
\vspace{.1in} 
\noindent 
\textbf{Storage cost.} 
Storage is the main constraint against building too many multi-dimensional
sample families, and thus, our optimization framework takes the
storage cost of different samples into account.  We use $Store(\phi)$
to denote the storage cost (say, in MB) of building a sample family on a
set of columns $\phi$.  

Given these three factors defined above, we now introduce
our optimization formulation. Let the overall storage budget be
$\mathbb{S}$. Consider the set of $\alpha$ column combinations that
are {\it candidates} for building sample families on, say
$\phi_{1},\cdots,\phi_{\alpha}$. For example, this set can include all
column combinations that co-appeared at least in one of the query templates. Our goal is to
select $\beta$ subsets among these candidates, say
$\phi_{i_1},\cdots,\phi_{i_{\beta}}$, such that
$$\sum_{k=1}^{\beta}Store(\phi_{i_k}) \leq \mathbb{S}$$ and these
subsets can ``\emph{best}'' answer our queries.

Specifically, in \system, we maximize the following mixed linear integer program (MILP):
\begin{equation}
G=\sum_{i=1}^{m} w_{i} \cdot y_{i} \cdot \Delta(\phi^{T}_{i})
\label{eq:goal}
\end{equation}
\vskip -0.1in
 subject to 
\vskip -0.1in
\begin{equation}
\sum_{j=1}^{\alpha} Store(\phi_{j}) \cdot z_{j} \leq \mathbb{S}
\label{eq:storage}
\end{equation}
\vskip -0.1in
 and 
\vskip -0.1in
\begin{equation}
\forall 1\leq i\leq m: ~~ y_{i} \leq  \underset{\phi_{j}\subseteq \phi_{i}^T }{\max} \frac{|D(\phi_{j})|}{|D(\phi_{i}^T)|} \cdot z_{j}  
\label{eq:coverage}
\end{equation}
\noindent
where $0\leq y_{i}\leq 1$ and $z_{j}\in\{0,1\}$. 

Here, $z_{j}$ variables determines whether a sample family is built or not,
i.e., when $z_{j}=1$, we build a sample family on $\phi_{j}$;
otherwise, when $z_{j}=0$, we do not.

The goal function (\ref{eq:goal}) aims to maximize the weighted sum of
the coverage of the query templates. The degree of coverage of query
template $\phi^{T}_i$ with a set of columns $\phi_j \subseteq
\phi^{T}_i$, is the probability that a given value in $\phi^{T}_i$ is
also present in the stratified sample associated with $\phi_j$, i.e.,
$S(\phi_j, K)$.  Since this probability is hard to compute in
practice, in this paper we approximate it by $y_i$ value which
is determined by constraint (\ref{eq:coverage}). The $y_i$ value is in
$[0, 1]$, with $0$ meaning no coverage, and $1$ meaning full
coverage. The intuition behind (\ref{eq:coverage}) is that when we
build a stratified sample on a subset of columns $\phi_{j}\subseteq
\phi^{T}_{i}$, namely when $z_{j}=1$, we have partially covered
$\phi^{T}_{i}$ too. We compute this coverage as the ratio of the
number of unique values between the two sets, i.e.,
$|D(\phi_{j})|/|D(\phi_{i}^T)|$. When the number of unique values in
$\phi_j$ and $\phi^T_i$ are the same we are guaranteed to see all the
unique values of $\phi^T_i$ in the stratified sample over $\phi_j$ and
therefore the coverage will be $1$.

Finally, we need to weigh the coverage of each set of columns by their
importance: a set of columns $\phi^{T}_{i}$ is more important to 
cover when (1) it has a higher frequency, which is represented by
$w_{i}$, or (2) when the joint distribution of $\phi^{T}_{i}$ is more
skewed (non-uniform), which is represented by $\Delta(\phi^{T}_{i})$.
Thus, the best solution is when we maximize the sum of $w_{i}
\cdot y_{i} \cdot \Delta(\phi_{i})$ for all query templates, as
captured by our goal function (\ref{eq:goal}).

Having presented our basic optimization formulation,
 we now address the problem of choosing the
initial candidate sets, namely $\phi_{1},\cdots,\phi_{\alpha}$.  In~\xref{sec:drift}
we discuss how this problem formulation
handles changes in the data distribution as well as changes of
workload.

\subsubsection{Scaling the Solution}
\label{sec:candidates}
Naively, one can use the power set of all the columns as the set of
candidate column-sets. However, this leads to an exponential  number of variables in the MILP formulation and thus, becomes impractical for
tables with more than $O(20)$ columns.  To reduce this exponential search space,
we restrict
the candidate subsets to only those that have appeared together at least in one of the
query templates (namely, $\{\phi|\exists~i,~\phi\subseteq\phi_{i}^T\}$).  This
does not affect the optimality of the solution, because a column $A$
that has not appeared with the rest of the columns in $\phi$ can be
safely removed without affecting any of the query templates. In our experiments, we have been able
to solve our MILP problems with $O(10^6)$ variables within $6$ seconds
using an open-source solver \cite{glpk}, on a commodity server.
However, when the {\tt WHERE/GROUP BY} clauses of the query templates
exceed $O(20)$ columns, the number of variables can exceed $10^6$.  In
such cases, to cope with this combinatorial explosion, 
we further limit candidate subsets 
to those  consisting of no more than a fixed
number of columns, say $3$ or $4$ columns. 
This too has  proven to be a safe
restriction since, in practice, subsets with a large number of columns
have many unique values, and thus, are not chosen by the optimization framework due to 
 their high storage cost.

 
\subsubsection{Handling Data/Workload Variations}
\label{sec:drift}

Since \system~is designed to handle ad-hoc queries,
our optimization formulation is designed to avoid {\it over-fitting} samples to past queries by:  
(i) only looking at the set of columns that appear in the query templates instead optimizing for specific constants in queries and (ii) considering infrequent subsets with a high degree of skew (captured by $\Delta$ in~\xref{sec:formulation}).  

In addition, \system~periodically (currently, daily) updates data and workload
statistics to decide whether the current set of sample families are
still effective or if the optimization problem needs to be re-solved based
on the new input parameters. When re-solving the optimization, 
\system{} tries to find a solution that is robust to workload changes
by favoring sample families that require fewer changes to
the existing set of samples, as described below. 
Specifically, \system~
allows the administrator
to decide what percentage of the sample families (in terms of storage
cost) can be discarded/added to the system whenever \system~triggers
the sample creation module as a result of changes in data or workload
distribution. The administrator makes this decision by manually setting a parameter $0\leq r\leq 1$, 
which is incorporated into an extra constraint in our MILP formulation:

\vskip -0.1in
 \begin{equation}
\sum_{j=1}^{\alpha} (\delta_{j}- z_{j})^{2} \cdot S (\phi_{j}) \leq r\cdot \sum_{j=1}^{\alpha} \delta_{j} \cdot S (\phi_{j})
\label{eq:drift}
 \end{equation} 
 
Here $\delta_{j}$'s are additional input parameters stating whether 
$\phi_{j}$ already exists in the system (when $\delta_{j}=1$) or it does not ($\delta_{j}=0$).
In the extreme case, when the administrator chooses $r=1$, 
the constraint (\ref{eq:drift}) will trivially hold and thus, the
 sample creation module is free to create/discard any sample
families, based on the other constraints discussed in~\xref{sec:formulation}. 
On the other hand, setting
$r=0$ will completely disable
this module in~\system, i.e., no new samples will be created/discarded
because $\delta_{j}=z_{j}$ will be enforced for all $j$'s. For values of
 $0<r<1$, we ensure that the total size of the samples that need to created/discarded is at most a fraction $r$ of the total size of existing samples in the system (note than we have to {\it create} 
a new sample when $z_{j}=1$ but $\delta_{j}=0$ and need to {\it delete} an existing sample when 
$z_{j}=0$ but $\delta_{j}=1$).
When \system~runs the optimization problem for the first time
 $r$ is always set to $1$.

 \section{BlinkDB Runtime} 
 \label{solution:selection}

 In this section, we provide an overview of query execution in \system.
and our approach for online sample
 selection.  Given a query $Q$, the goal is to select one (or more) sample(s)
 at~\emph{run-time} that meet the specified time or error
 constraints and then compute answers over them. Selecting a sample involves first selecting a _sample
 family_ (\ie~dimension), and then selecting a _sample resolution_
 within that family. The selection of a sample family depends on the
 set of columns in $Q$'s clauses, the selectivity of its selection
 predicates, and the data distribution. In turn, the selection of the
 resolution within a sample family depends on $Q$'s time/accuracy
 constraints, its computation complexity, and the physical
 distribution of data in the cluster. 

 As with traditional query processing, accurately predicting the
 selectivity is hard, especially for complex {\tt WHERE} and {\tt
   GROUP-BY} clauses. This problem is compounded by the fact that the
 underlying data distribution can change with the arrival of new data.
 Accurately estimating the query response time is even harder,
 especially when the query is executed in a distributed fashion. This
 is (in part) due to variations in machine load, network throughput,
 as well as a variety of non-deterministic (sometimes time-dependent)
 factors that can cause wide performance fluctuations.

 Rather than try to model selectivity and response time, our sample selection strategy takes
 advantage of the large variety of non-overlapping samples in
 {\system} to estimate the query error and response time at
 \emph{run-time}. In particular, upon receiving a query,
 \system~``probes'' the smaller samples of one or more sample families
 in order to gather statistics about the query's selectivity, complexity and
 the underlying distribution of its inputs.  Based on these results,
 {\system} identifies an optimal sample family and resolution to run the query on.

 In the rest of this section, we explain our query execution, by first
 discussing our mechanism for selecting a sample family
 (\xref{sec:select-samplefamily}), and a sample size (\xref{sec:select-samplesize}).
 We then discuss how to produce unbiased results from 
 stratified samples (\xref{sec:unbiased-biased}), followed by
re-using intermediate data  in \system{} (\xref{sec:reusing}).

\subsection{Selecting the Sample Family}
\label{sec:select-samplefamily}

Choosing an appropriate sample family for a query
primarily depends on the set of columns used for _filtering_ and/or _grouping_.
The {\tt WHERE} clause itself  may either
consist of conjunctive predicates ({\tt condition1 AND condition2}), disjunctive
predicates ({\tt condition1 OR condition2}) or a combination of the two.
 Based on
this, {\system} selects one or more suitable sample families for the
query as described in ~\xref{sec:conjunctive-predicates} and~\xref{sec:disjunctive-predicates}. 

\begin{table*}[t]
\vspace{-.1in}
{\small
\hfill{}
\begin{tabular}{| p{0.08\textwidth} | p{0.42\textwidth} | p{0.42\textwidth} |}
\hline
\bf Operator & \bf Calculation & \bf Variance\\
\hline
\hline
\texttt{Avg} & $\frac {\sum{X_i}}{n}$ [$X_i$: observed values; $n$: sample size) & $\frac{S_n^2}{n}$ ($S_n^2$: sample variance]
\\\hline
\texttt{Count} & $\frac{N}{n}\sum{\mathbf{I}_{K}}$ [$\mathbf{I}_{K}$: matching tuple indicator; $N$: Total Rows] & $\frac{N^2}{n} c(1-c)$ [$c$: fraction of items which meet the criterion]
\\\hline
\texttt{Sum} & $\left(\frac{N}{n}\sum{\mathbf{I}_{K}} \right) \bar{X}$ & $N^2\frac{S_n^2}{n} c(1-c)$\\\hline
\texttt{Quantile} & 
$x_{\lfloor h\rfloor}+ (h - \lfloor h\rfloor)(x_{\lceil h\rceil} -x_{\lfloor h\rfloor})$ [$x_i$: $i^{th}$ ordered element & $\frac
{1}{f(x_p)^2}\frac{p(1-p)}{n}$ [$f$: pdf for data] \\
& in sample;  $p$: specified quantile; $h$: $p\times n$] &\\\hline
\end{tabular}
\hfill{}
}
\vspace{-.1in}
\caption{Error estimation formulas for common aggregate operators.}
\vspace{-.1in}

\label{tab:closedform}
\end{table*}


\subsubsection{Queries with Conjunctive Predicates} 
\label{sec:conjunctive-predicates}
Consider a query $Q$ whose {\tt WHERE} clause
contains only conjunctive predicates. Let $\phi$ be the set of columns
that appear in these clause predicates. If $Q$ has multiple {\tt
  WHERE} and/or {\tt GROUP BY} clauses, then $\phi$ represents the
union of the columns that appear in each of these
predicates.
If \system finds one or more stratified sample family on a set of columns
$\phi_i$ such that $\phi \subseteq \phi_i$, we simply pick the $\phi_i$ with the smallest number of columns, 
and run the query on $SFam(\phi_i)$. 
However, if there is no stratified sample on a column
set that is a superset of $\phi$, we run $Q$ in parallel on the smallest
sample of all sample families currently maintained by the system. 
Then, out of these samples we select the one that corresponds to the highest ratio
of (i) the number of rows \emph{selected} by $Q$, to (ii) the number of
rows \emph{read} by $Q$ (\ie number of rows in that sample). Let $SFam(\phi_i)$ be the family containing
this sample. The intuition behind this choice is that the response
time of $Q$ increases with the number of rows it reads, while the
error decreases with the number of rows $Q$'s {\tt WHERE} clause selects.

A natural question is why probe all sample families, instead of
only those built on columns that are in $\phi$? The reason is simply
because the columns in $\phi$ that are missing from a family's column
set, $\phi_i$, can be negatively correlated with the columns in
$\phi_i$. 
In addition, we expect the smallest sample of each family to
fit in the aggregate memory of the cluster, and thus running $Q$ on
these samples is very fast. 
\eat{
Let $n_{i, m}$ be the number of rows selected by $Q_i$ when running on
the smallest sample of the selected family, $S(\phi_i, K_m)$. If
$n_{i,m} < K_m$, then we are done as $S(\phi_i, K_m)$ contains all
rows in the original table that match $Q$, and thus, in this case, we
get an exact answer.  Otherwise, we select sample $S(\phi_i, K_q)$
where $K_q$ is the smallest value in $SFam(\phi)$ that is larger than
$K_m n/n_{i,m}$. This ensures that the expected number of rows
selected by $Q$ when running on sample $S(\phi_i, K_q)$ is $\geq
n$. As a result, the answer of $Q$ on $S(\phi_i, K_q)$ will meet
$Q_i$'s error constraint.

\vspace{.1in} 
\noindent{\textbf{Response time constraints:}} If $Q$ specifies a
response time constraint, we select the sample family on which to run
$Q$ the same way as above.  Again, let $SFam(\phi_i)$ be the selected
family and let $r_{i,m}$ be the number of rows that $Q$ reads when
running on $S(\phi_i, K_m)$. In addition, let $r$ be the maximum
number of rows that $Q$ can read without exceeding its response time
constraint assuming the sample is stored on the disk. Consider sample
$S(\phi_i, K_q)$ where $K_q$ is the largest value in $SFam(\phi_i)$
that is smaller than $K_m r/r_{i,m}$. Similarly, let $(\phi_i, K'_q)$
be the largest sample of $SFam(\phi_i)$ stored in memory. Finally, we
select the largest sample between $S(\phi_i, K_q)$ and $S(\phi_i,
K'_q)$ and execute $Q_i$ on it.
}

\subsubsection{Queries with Disjunctive Predicates}
\label{sec:disjunctive-predicates}

Consider a query $Q$ with disjunctions in its  {\tt WHERE} clause.
In this case, we rewrite $Q$
as a union of queries $\{Q_1, Q_2,~\ldots, Q_p\}$, where each query
$Q_i$ contains only conjunctive predicates. 
Let $\phi_j$ be the set of
columns in $Q_j$'s predicates. Then, we associate with every query
$Q_i$ an error constraint (\eg standard deviation $s_i$) or time constraint,
such that we can still satisfy $Q$'s error/time constraints when aggregating
the results over $Q_i$ $(1 \leq i \leq p)$ in parallel. 
 Since each of the queries,
$Q_i$ consists of only conjunctive predicates, we select their corresponding
sample families using the selection procedure described in~\xref{sec:conjunctive-predicates}.


\subsection{Selecting the Sample Size}
\label{sec:select-samplesize}

Once a sample family is decided, {\system} needs to select an appropriately
sized sample in that family based on the query's response time or error
constraints. We accomplish this by constructing an {\it Error-Latency Profile}
(ELP) for the query. 
The ELP characterizes the rate at which the error decreases (and the query response
time increases) with increasing sample sizes, and is built simply by running the query on smaller
samples to estimate the selectivity and project latency and error for larger samples. For a distributed query,
its runtime scales with sample size, with the scaling rate depending on the exact
query structure ({\tt JOINS, GROUP BYs} etc.),
physical placement of it's inputs and the underlying data distribution~\cite{rope}.
As shown in Table~\ref{tab:closedform}, the variation of error
(or the variance of the estimator)  primarily
depends on the variance of the underlying data distribution and the actual
number of tuples processed in the sample, which in turn depends on the
selectivity of a query's predicates.


\eat{
For queries with time constraints, we assume linear scaling with
increasing data. This is reasonable given the operations we support,
however the initial analysis must be run on samples large enough to
make sure they are larger than a base amount, with the base amount
determined by processor cache sizes, RAM sizes, and buffer sizes. By
running on multiple instances of data with this smaller size, the
system can determine the average time processing such a query would
take, and linearly scale it up to just below what is allowed by the
time constraints. This scaling provides an ideal sample size, that is
rounded down, allowing the system to satisfy the supplied constraints,
while simultaneously minimizing error. Assuming exponential sample
sizes, with a base size of $k$, the scaling will produce a sample with
no fewer than $\frac{1}{k}$th the number of rows, and hence an error
which is only $\sqrt{k}$ higher than expected.
 }

\vspace{.1in}
\noindent
\textbf{Error Profile:} An error profile is created for all queries with error
constraints. If $Q$ specifies an error (\eg standard deviation) constraint,
the {\system} error profile tries to predict the size of the smallest sample
that satisfies $Q$'s error constraint.
\eat{Assume this sample consists of
a set of randomly selected rows from the original table that match $Q$'s
filter predicates.}
Table~\ref{tab:closedform} shows the formulas of the
variances for the most common aggregate operators. Note that in all
these examples, the variance is proportional to $\sim 1/n$, and thus
the standard deviation (or the statistical error) is proportional to
$\sim 1/\sqrt{n}$, where $n$ is the number of rows from
a sample of size $N$ that match $Q$'s filter predicates. The
ratio $n/N$ is called the _selectivity_ $s_q$ of the query. 

 Let 
$n_{i, m}$ be the number
of rows selected by $Q$ when running on the smallest sample of the selected family,
$S(\phi_i, K_m)$.  Furthermore,  \system{} estimates the query selectivity $s_q$, sample variance
$S_n$ (for {\tt Avg/Sum}) and the input data distribution $f$ (for {\tt Quantiles}) as it runs 
on this sample.
Using these parameter estimates, we calculate
the number of rows $n = n_{i,m}$ required to meet $Q$'s error constraints using the
equations in Table~\ref{tab:closedform}. Then we select the sample
$S(\phi_i, K_q)$ where $K_q$ is the smallest value in $SFam(\phi)$
that is larger than $n*(K_m/n_{i,m})$. This ensures that the expected number of rows
selected by $Q$ when running on sample $S(\phi_i, K_q)$ is $\geq
n$. As a result, the answer of $Q$ on $S(\phi_i, K_q)$ is expected to meet
$Q_i$'s error constraint.

\vspace{.1in} 
\noindent{\textbf{Latency Profile:}} Similarly, a latency profile is
created for all queries with response time constraints. If $Q$ specifies a
response time constraint, we select the sample family on which to run
$Q$ the same way as above.  Again, let $SFam(\phi_i)$ be the selected
family and let $n_{i,m}$ be the number of rows that $Q$ reads when
running on $S(\phi_i, K_m)$. In addition, let $n$ be the maximum
number of rows that $Q$ can read without exceeding its response time
constraint.

$n$ depends on the physical placement of input data
(disk vs. memory), the query structure and complexity, and the degree of
parallelism (or the resources available to the query).
As a simplification, {\system} simply predicts $n$ by assuming latency scales linearly with input size
input data, as is commonly done in parallel distributed execution environments~\cite{mantri-osdi, late-osdi}. 
To avoid non-linearities that may arise when running on very small in-memory samples,
\system{} runs a few smaller samples until performance seems to grow linearly
and then estimates appropriate linear scaling
constants (\ie {\it data processing rate(s), disk/memory I/O rates etc.})
for the model.
These constants are used to estimate a value of 
$n$ that is just below what is allowed by the time constraints. Once $n$ is
estimated, {\system} picks sample $S(\phi_i, K_q)$
where $K_q$ is the largest value in $SFam(\phi_i)$
that is smaller than $n*(K_m /n_{i,m})$ and executes $Q$ on it in parallel. 

\eat{
Similarly, let $(\phi_i, K'_q)$
be the largest sample of $SFam(\phi_i)$ stored in memory. Finally, we
select the largest sample between $S(\phi_i, K_q)$ and $S(\phi_i,
K'_q)$ and execute $Q_i$ on it.
}

\begin{figure}[htbp]
\begin{center}
\includegraphics*[width=225pt]{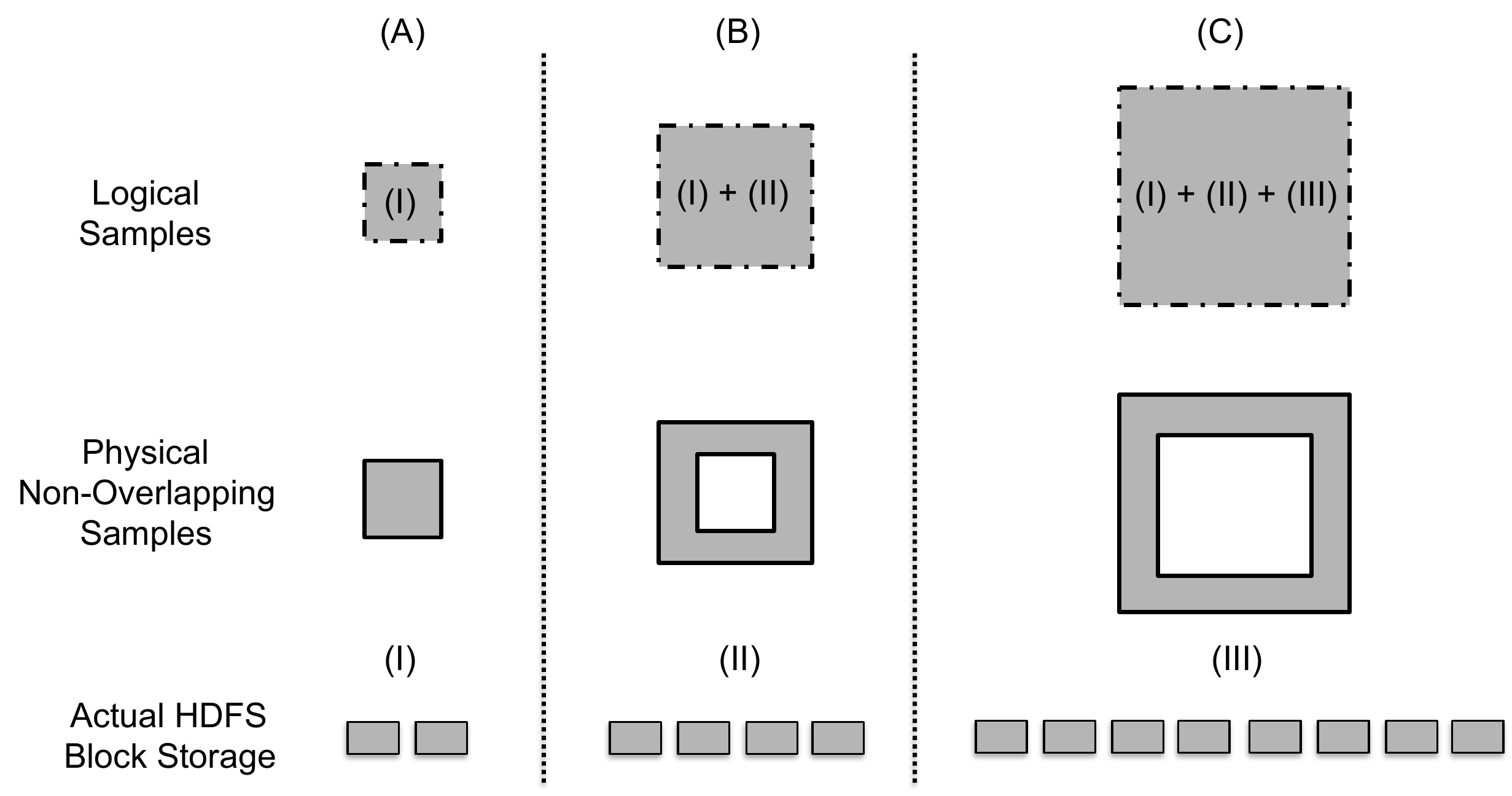}
\caption{Mapping of {\system}'s non-overlapping samples to HDFS blocks}
\label{fig:sampleselection}
\end{center}
\vspace{-.1in}
\end{figure}


\subsection{Query Answers from Stratified Samples}
\label{sec:unbiased-biased}

Consider the \emph{Sessions} table, shown in Table~\ref{tab:toy-table}, and the following query against this 
table.
{\small
\begin{verbatim}
SELECT City, SUM(SessionTime)
FROM Sessions
GROUP BY City
WITHIN 5 SECONDS
\end{verbatim}
}

If we have a uniform sample of this table, estimating the query answer is straightforward. For instance, suppose we take a 
uniform sample with 40\% of the rows of the original \emph{Sessions} table.  In this case, we simply scale the
final sums of the session times by $1/0.4=2.5$ in order to produce an unbiased estimate of the true answer\footnote{Here we use the terms _biased_
and _unbiased_ in a statistical sense, meaning that although the estimate might vary from 
the actual answer, its _expected value_ will be the same as the actual answer.
}.

Using the same approach on a stratified sample may produce a biased estimate of the answer for this query.
For instance, consider a  stratified sample of the \emph{Sessions} table on the \emph{Browser} column, as shown in Table~\ref{tab:toy-sample}.
Here, we have a cap value of $K=1$, meaning we keep all rows whose \emph{Browser} only appears once
in the original \emph{Sessions} table (e.g., \emph{Safari} and \emph{IE}), but when a browser has more than one row (\ie \emph{Firefox}), 
only one of its rows is chosen, uniformly at random.  In this example we have choose the row that 
corresponds to \emph{Yahoo.com}.  Here, we cannot simply scale the final sums of the session times because different values were sampled with different rates. Therefore, to produce unbiased answers, 
\system~ keeps track of the effective sampling rate applied to each row, e.g. 
in Table~\ref{tab:toy-sample}, this rate is $0.33$ for \emph{Firefox} row, while it is $1.0$ for \emph{Safari} and \emph{IE} rows
since they have not been sampled at all. Given these per-row sample rates, obtaining an
unbiased estimates of the final answer is straightforward, e.g., in this case the sum of sessions times  
 is estimated as $\frac{1}{0.33}*20+\frac{1}{1}*82$ for \emph{New York} and  as $\frac{1}{1}*22$ for \emph{Cambridge}. 
 Note that here we will not produce any output for \texttt{Berkeley} (this would not happen if we had access to a stratified sample over \emph{City}, for example). In general, the query processor in \system~ performs a similar correction when operating
on stratified samples.

\begin{table}[!h]
{\small
\hfill{}
\begin{tabular}{| c | c | c | c |}
\hline
\bf URL & \bf City & \bf Browser & \bf SessionTime\\
\hline
cnn.com & New York & Firefox & 15 \\
\hline
yahoo.com & New York & Firefox & 20 \\
\hline
google.com & Berkeley & Firefox & 85\\
\hline
google.com & New York & Safari & 82 \\
\hline
bing.com & Cambridge & IE & 22 \\
\hline
\end{tabular}
\hfill{}
}
\caption{\texttt{Sessions} Table.}
\label{tab:toy-table}
\end{table}

\begin{table}[!h]
\vspace{-.2in}
{\small
\hfill{}
\begin{tabular}{| c | c | c | c | c | }
\hline
\bf URL & \bf City & \bf Browser  & \bf SessionTime & \bf SampleRate\\
\hline
yahoo.com & New York & Firefox & 20 & 0.33 \\
\hline
google.com & New York & Safari & 82 & 1.0\\
\hline
bing.com & Cambridge & IE & 22 & 1.0\\
\hline
\end{tabular}
\hfill{}
}
\caption{A sample of \texttt{Sessions} Table stratified on \texttt{Browser} column.}
\label{tab:toy-sample}

\end{table}

\subsection{Re-using Intermediate Data}
\label{sec:reusing}
Although {\system} requires a query to operate
on smaller samples to construct its ELP, the
intermediate data produced in the process is effectively utilized when
the query runs on larger samples. Fig.~\ref{fig:sampleselection} decouples
the logical and physical view of the non-overlapping
samples maintained by {\system} as described in~\xref{sec:stratified-samples}.
Physically, each progressively bigger logical sample ($A$, $B$ or $C$) consists
of all data blocks of the smaller samples in the same family. {\system} maintains a transparent mapping between
logical samples and data blocks, \ie $A$ maps to (I), $B$ maps to (I, II) and
$C$ maps to (I, II, III). Now, consider a query $Q$ on this data. First, {\system} creates
an ELP for $Q$ by running it on the smallest sample $A$,
\ie it operates on the first two data blocks to estimate various query parameters
described above and caches all intermediate data in this process. Subsequently,
 if sample $C$ is chosen based on the $Q$'s error/latency requirements,
{\system} only operates on the additional data blocks, utilizing the
previously cached intermediate data.

\eat{
Please note that the {\system}'s incremental block processing techniques
shares some aspects of stream processing frameworks (such as OLA)
}

\eat{
\srm{Following should be made more formal.}  Our solution to this
problem is therefore based on a mixture of query analysis techniques,
and heuristics to determine the last unpredictable factor. On first
receiving a query, \system~ analyzes the query to determine what
columns show up in \texttt{where} clauses, \texttt{group by} clauses
or other similar clauses. Should a ``sample family'' covering exactly
those columns be found, the subsequent analysis is run on that family,
but in the absence of such a family, the analysis is run on the
smallest samples of every family. This allows use of the analytic
information derived below, and some metric of selectivity to choose
the precise sample family on which the query should be executed.

\srm{Following paragraph needs to refer back to
  section~\ref{sec:stratified-sample-overhead}. In fact what follow
  seems to be essentially a description of what was already said
  there.}  Figure~\ref{closedform} lists equations used for
calculating the value, and variance of the four operations
\system~ currently supports. We rely on the fact that all these
operations are asymptotically normal to compute a $95\%$ confidence
interval for quantities reported. The variance for all currently
supported operations are inversely proportional to the sample size,
and hence we can use the strategy discussed
section~\ref{sec:stratified-samples} to select an appropriately sized
sample.

For queries with time constraints, we assume linear scaling with
increasing data. This is reasonable given the operations we support,
however the initial analysis must be run on samples large enough to
make sure they are larger than a base amount, with the base amount
determined by processor cache sizes, RAM sizes, and buffer sizes. By
running on multiple instances of data with this smaller size, the
system can determine the average time processing such a query would
take, and linearly scale it up to just below what is allowed by the
time constraints. This scaling provides an ideal sample size, that is
rounded down, allowing the system to satisfy the supplied constraints,
while simultaneously minimizing error. Assuming exponential sample
sizes, with a base size of $k$, the scaling will produce a sample with
no fewer than $\frac{1}{k}$th the number of rows, and hence an error
which is only $\sqrt{k}$ higher than expected.

\srm{This feels incomplete.  I was expecting some algorithm that
  talked about how we could use histograms, etc to estimate the
  fraction of rows in a sample that would satisfy a query and that
  would talk about how to pick the sample based on that.}
}

\subsection{Sample Maintenance}\label{sec:sample-maintenance}

{\system}'s reliance on offline sampling
can result in situations where a sample is
not representative of the underlying data. Since
statistical guarantees are given across repeated 
resamplings, such unrepresentative samples can
adversely effect decisions made using \system. 
Such problems are unavoidable when using offline sampling,
and affect all systems relying on such techniques.

As explained in \xref{implementation}, \system~uses a
parallel binomial sampling framework to generate samples when
data is first added. We rely on the same framework for sample
replacement, reapplying the process to existing data, and replacing
samples when the process is complete.

To minimize the overhead of such recomputation, \system~uses a low-priority,
background task to compute new samples from existing data. The task is designed
to run when the cluster is underutilized, and is designed to be suspended at other
times. Furthermore, the task utilizes no more than a small fraction of unutilized
scheduling slots, thus ensuring that any other jobs observe little or no overhead.

\section{Implementation}\label{implementation}

Fig.~\ref{fig:implementation} describes the entire {\system} ecosystem.
{\system} is built on top of the Hive Query Engine~\cite{hive}, supports
both Hadoop MapReduce~\cite{hadoopmr} and Spark~\cite{spark} (via Shark~\cite{shark})
at the execution layer and uses the Hadoop Distributed File System~\cite{hdfs}
at the storage layer.

Our implementation required changes in a few key components. We
added a shim layer of {\it {\systeminitalics} Query Interface} to the HiveQL parser
that enables queries with response time and error bounds. Furthermore, it
detects data input, which causes the {\it Sample Creation and Maintenance}
module to create or update
the set of random and multi-dimensional samples at multiple granularities
as described in \xref{solution:sampling}. We further extend
the HiveQL parser to implement a {\it Sample Selection} module that
re-writes the query and iteratively assigns it an appropriately sized
biased or random sample as described in \xref{solution:selection}.
We also added an {\it Uncertainty Propagation} module to modify
the pre-existing aggregation functions summarized in Table~\ref{tab:closedform}
to return errors bars and confidence intervals in addition to the result. Finally,
we extended the SQLite based Hive Metastore to create {\it {\systeminitalics}
Metastore} that maintains a transparent mapping between the non-overlapping logical
samples and physical HDFS data blocks as shown in Fig.~\ref{fig:sampleselection}.


\begin{figure}[tbp]
\begin{center}
\includegraphics*[width=150pt]{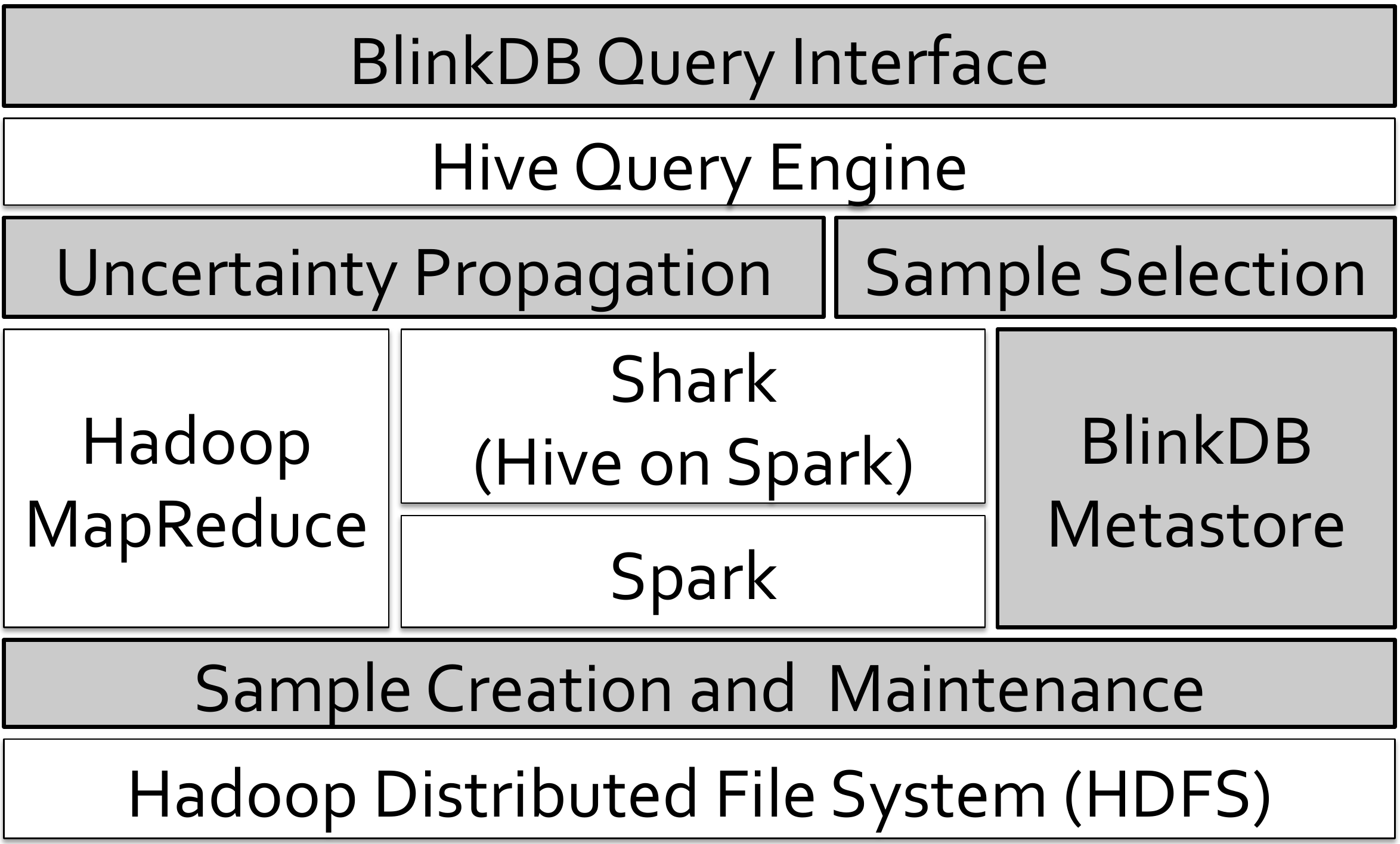}
\caption{\system's Implementation Stack}
\label{fig:implementation}
\vspace{-0.3in}
\end{center}
\end{figure}

We also extend Hive to add support for sampling from tables.
This allows us to leverage Hive's parallel execution engine for sample creation in a 
distributed environment. Furthermore, our sample
creation module optimizes block size and data placement for samples in
HDFS.

\eat{We extended Hive with support for sampling;  this allows us to build samples
by leveraging the Hive parallel query execution engine.}
\if{0}
In order to incur little overhead in creating and replacing samples as new
data in being added in the system, \system{} has parallel sample
creation framework that creates in-place binomial samples for data stored in
HDFS. This is achieved by leveraging the Hive query execution engine to
create and maintain samples in {\system}.

Specifically, we augmented the
HiveQL to add {\tt SAMPLE ON [RANDOM | Column Name(s)]} operators. The
{\tt SAMPLE ON RANDOM} operator shares the same semantics of a {\tt WHERE}
clause and outputs each input row with a probability $p$. The parameterized {\tt SAMPLE
ON Column Name(s)} operator takes one or more columns parameters and aggregates
the input data on the unique values (_key_) in those column combinations. It then applies the
{\tt SAMPLE ON RANDOM} operator on each unique _key_ to create stratified samples.
\fi
In \system{}, uniform samples are generally created in a few hundred
seconds. This is because the time taken to create them only depends on the disk/memory bandwidth
and the degree of parallelism. On the other hand, creating stratified
samples on a set of columns takes anywhere between a $5-30$ minutes
depending on the number of unique values to stratify on, which decides the number
of reducers and the amount of data shuffled.
\eat{Please note that using binomial
sampling in place of reservoir sampling does introduce some statistical bias in the
system which is appropriately corrected.}

\if{0}
\subsection{Storage Optimizations}
\label{sec:data-partition}

Query response times in \system~are usually dominated by the time
to access stored samples on the disk.
As such, there are two storage-related implementation questions that
have a significant impact on the sample access time: (1) What is the
size of the file system block?, and (2) How is a sample stored in the
underlying file system? We answer these questions next in the context
of a HDFS-like file system.

Smaller block sizes lead to higher parallelism as HDFS can do a better
job spreading a file across the nodes in a cluster. On the other
hand, large blocks reduce file system overhead~\cite{namenode-pressure},
and improves the disk throughput. In our implementation, we balance
this trade-off by picking $3$ MB blocks, such that each block can be
read in a fraction of a second. While these blocks are significantly smaller than the
ones used in most HDFS deployments, our experience so far has not
revealed a significant impact on the read throughput and the system
scalability for our deployment.

\eat{
To answer the first question, consider a cluster consisting of $M$
disk drives, and let $R$ be the disk throughput. Then, we can read a
file of size $X$ in as little as $X/(M R)$ time. This assume that the
blocks of the file are perfectly load balanced across all disks.
To achieve a good level of load balancing across a large variety of
file sizes we want the blocks to be as small as
possible. Unfortunately, small blocks have a negative impact on disk
throughput, and incur a high file system overhead. For these reasons
and because they target batch workloads, the typical deployments of
HDFS use block sizes ranging from $64$ MB, all the way up to $1$
GB. Unfortunately, such large blocks would significantly impact query
response times. Assuming the sequential throughput of a disk is $R =
50$ MB/s, the time it takes to read a sample larger than $1$ GB is
bounded below by $1.3$s for $64$ GB blocks, and as much as $20$s for
$1$ GB blocks. As a result, in our deployment, we use much smaller
blocks of a few MBs. While these blocks are significantly smaller than
the ones used in most HDFS deployments, our experience so far has not
revealed a significant impact on the read throughput and the system
scalability.  }

We consider two answers to the second question: maintain the entire
sample in a single file or partition the sample in many small
files. First, assume we store the sample in a single file. Recall that
a stratified sample, $S(\phi, K)$, is clustered by the values in
$\phi$. In order to take advantage of this clustering, we need to know
at which location in the file does a key starts. This requires us to
maintain metadata associated to each file. The alternative is to split
the sample into smaller files along the key values in $\phi$. In this
case, the query plan will identify the file(s) in which the keys are
stored and then execute the query only on these files. We picked
the second approach for it significantly reduces the
{\it {\systeminitalics} Metastore}'s complexity.

\eat{
Since
maintaining non-overlapping samples requires us to split a sample in
multiple files anyway, in this paper we use the second approach and
map each sample to many small files. In particular, we pick the size
of a file to be roughly $B \times M$, where $B$ is the block size. In
the best case scenario, this design allows us to read a file as fast
as reading a block. This approach simplifies the design as it does not
require to maintain additional metadata per file, as in the case of
storing each sample in a large file.
}

\begin{figure}[htbp]
\begin{center}
\includegraphics*[width=225pt]{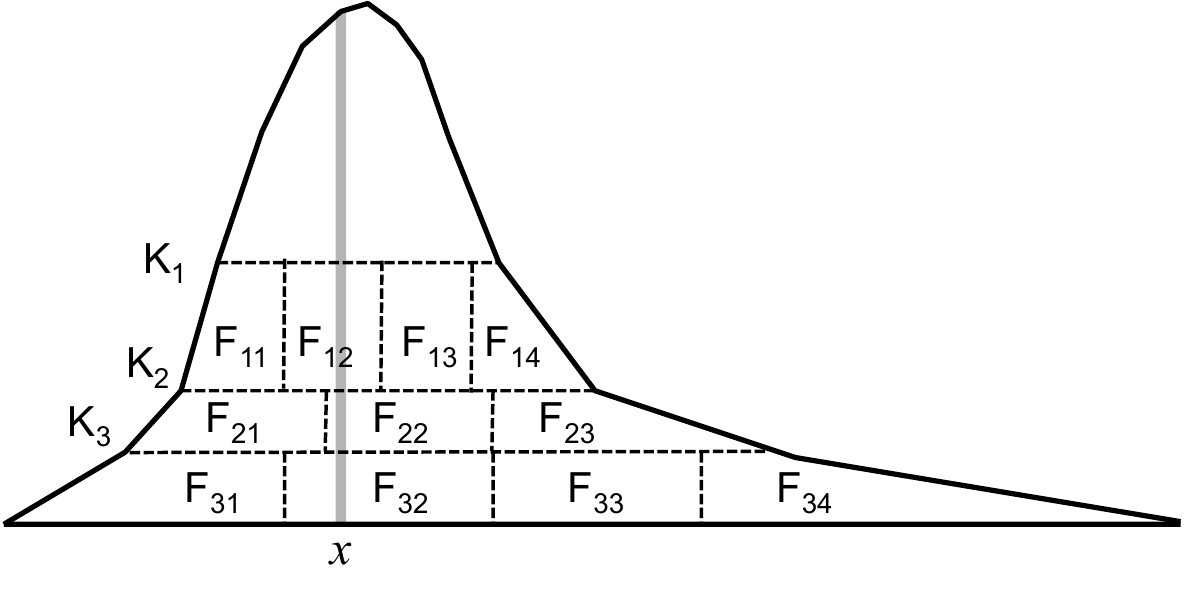}
\caption{Example of mapping the stratified sample in
  Figure~\ref{fig:stratified-sample} to many small files.}
\label{fig:sample2file-mapping}
\end{center}
\end{figure}

Figure~\ref{fig:sample2file-mapping} shows an example where the sample
$S(\phi, K_1)$ in Figure~\ref{fig:stratified-sample} is partitioned
into small files. In particular, the lower row, corresponding to
$S(\phi, K_3)$, is partitioned into files $\{F_{31}, F_{32}, F_{33},
F_{34}\}$. Similarly, the second row, which together with the first
row correspond to $S(\phi, K_2)$, is partitioned into files $\{F_{21},
F_{22}, F_{23}\}$. Finally, the last row, which together with the
previous two rows correspond to $S(\phi, K_1)$, is partitioned into
$\{F_{11}, F_{12}, F_{13}, F_{14}\}$. Now consider query $Q$ whose
{\tt WHERE} clause is $(\phi = x)$. If we use sample $S(\phi, K_3)$ to
answer $Q$, then $Q$ needs to read only file $F_{3, 2}$. If we use
sample $S(\phi, K_2)$, then $Q$ will read both $F_{3,2}$ and $F_{2,
  2}$. Finally, if we use the largest sample, $S(\phi, K_1)$, then $Q$
will read all files that contain value $x$,~\ie~$F_{3,2}$, $F_{2,2}$,
and $F_{1, 2}$, respectively.
\fi

\section{Evaluation}\label{evaluation}
In this section, we evaluate {\system}'s performance on a $100$ node EC2 cluster
using two workloads: a workload from Conviva Inc.~\cite{Conviva}
and the well-known TPC-H  benchmark~\cite{tpch}.
First, we compare {\system} to query execution on full-sized
datasets to demonstrate how even a small trade-off in the accuracy of final answers can
result in orders-of-magnitude improvements in query response times. Second, we evaluate
the accuracy and convergence properties of our optimal multi-dimensional, multi-granular
stratified-sampling approach against both random
sampling and single-column stratified-sampling approaches. Third, we evaluate the effectiveness
of our  cost models and error projections at meeting the user's accuracy/response time requirements.
Finally, we demonstrate {\system}'s ability to scale gracefully with increasing cluster size.

\eat{We evaluate a few different aspects of \system~: (i) The performance of \system~versus other approaches, e.g., a) Native Hive running on both MapReduce and Spark, b) Sampling approaches that do not pre-compute samples (\eg online aggregation) and c) Sampling approaches based on single-column stratified samples rather than an optimally chosen set of multi-dimensional ones along with (ii) The ability of \system~to meet the user's accuracy/response time requirements. We report  experiments on two different sets of workloads: (1) A real-world dataset and query logs from Conviva Inc.~\cite{Conviva} and (2)  the TPC-H benchmark. The Conviva and the TPC-H datasets were $17$ TB and $1$ TB (i.e., a scale factor of $1000$) in size respectively and were stored simultaneously across $100$ Amazon EC2 extra large nodes\footnote{Amazon EC2 extra large nodes have  $8$ CPU cores ($2.66$ GHz), $68.4$ GB of RAM, with an instance-attached disk of $800$ GB.}. The cluster is configured to utilize $75$ TB of distributed disk storage and $6$ TB of distributed RAM cache.}

\subsection{Evaluation Setting}

The Conviva and the TPC-H datasets were $17$ TB and $1$ TB (\ie a scale factor of $1000$)
in size, respectively, and were both stored across $100$ Amazon EC2 extra large
instances (each with  $8$ CPU cores ($2.66$ GHz), $68.4$ GB of RAM,
and $800$ GB of disk). The cluster was configured to utilize $75$ TB of
distributed disk storage and $6$ TB of distributed RAM cache.\notesameer{should we highlight
here that we are evaluating other frameworks against an implementation of \system{} on spark only
and cite shark paper if readers are interested in hive vs. shark comparison?}

\begin{figure*}[ht]
\vspace{-.2in}
\centering
\subfigure[Sample Families (Conviva)]{
\includegraphics*[width=160pt]{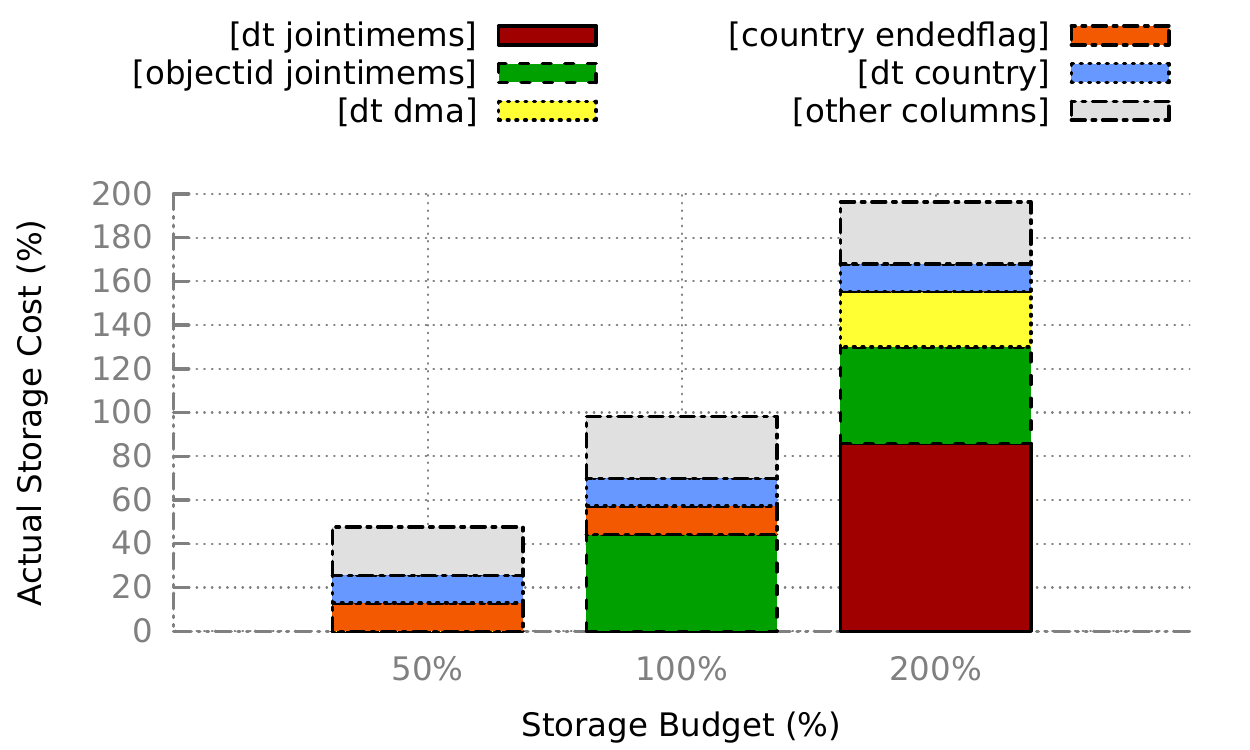}
	\label{fig:storageconviva}
}
\subfigure[Sample Families (TPC-H)]{
\includegraphics*[width=160pt]{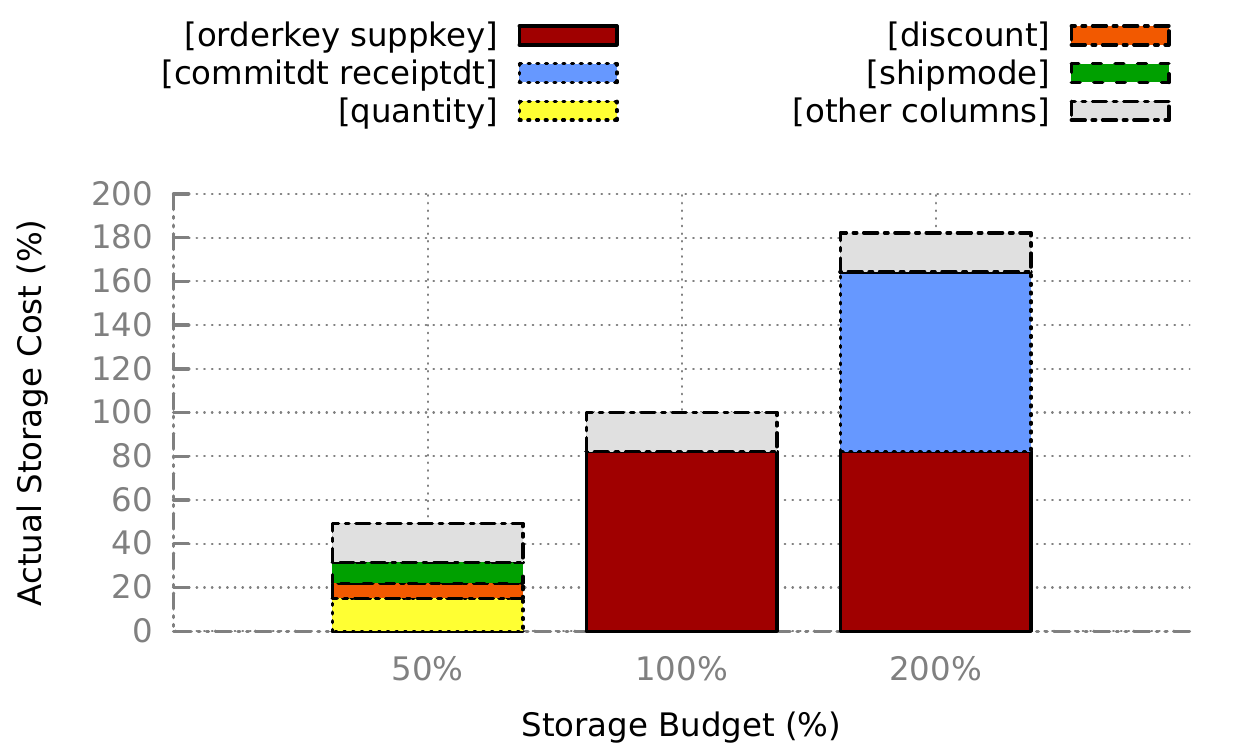}
	\label{fig:storagetpch}
}
\subfigure[{\system} Vs. No Sampling]{
\includegraphics*[width=160pt]{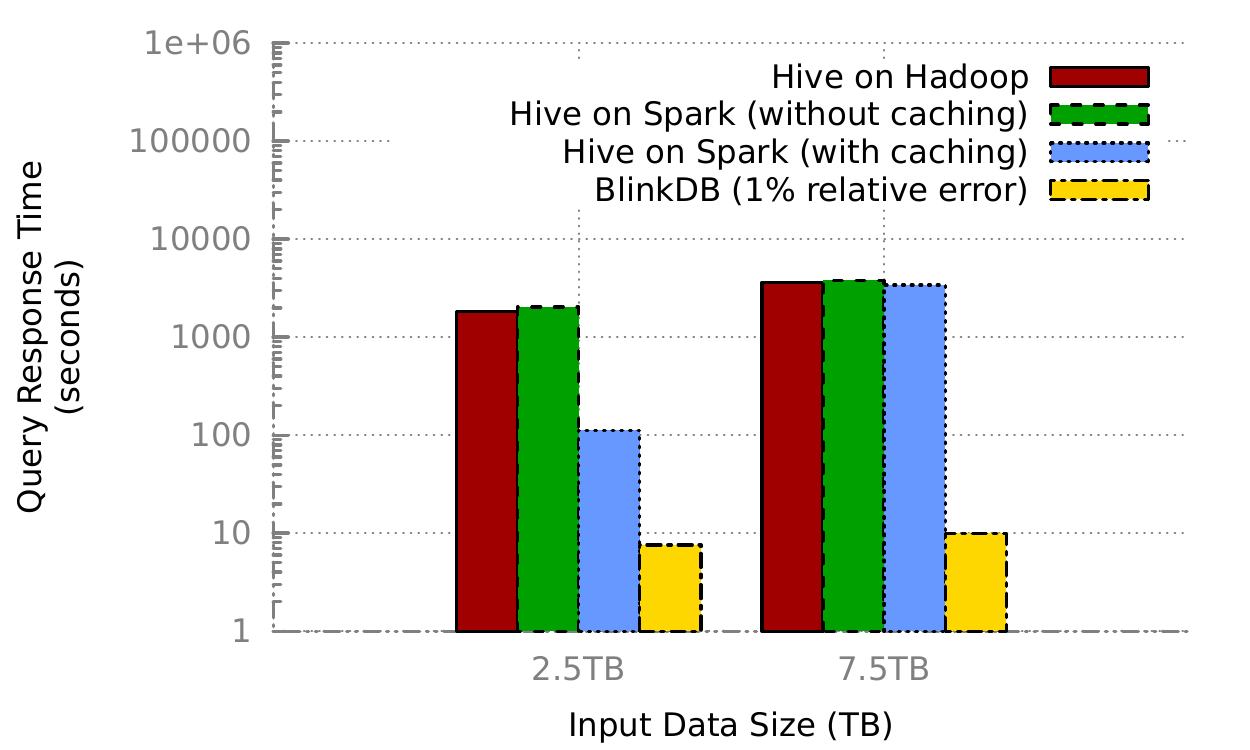}
	\label{fig:qs-vs-hive-vs-shark}
}
\vspace{-.1in}
\caption{\ref{fig:storageconviva} and~\ref{fig:storagetpch} show the
relative sizes of the set of stratified sample(s) created
for $50$\%, $100$\% and $200$\% storage budget on Conviva
and TPC-H workloads respectively.~\ref{fig:qs-vs-hive-vs-shark}
compares the response times (in log scale) incurred by Hive
(on Hadoop), Shark (Hive on Spark) -- both with and without
input data caching, and {\system}, on simple aggregation.}
\label{fig:optimization}
\vspace{-.1in}
\end{figure*}


\vspace{.1in}
\noindent
\textbf{Conviva Workload.}
The Conviva data represents information about video streams viewed by Internet users.
We use query traces from their SQL-based  ad-hoc querying system which is used for
problem diagnosis and data analytics on a log of media accesses by Conviva users.
These access logs are $1.7$ TB in size and constitute a small fraction of data collected
across $30$ days. Based on their underlying data distribution, we generated a $17$ TB
dataset for our experiments and partitioned it across $100$ nodes. The data consists of
a single large _fact_ table with $104$ columns, such as, customer ID, city, media URL,
genre, date, time, user OS, browser type, request response time, etc. The $17$ TB
dataset has about $5.5$ billion rows.

The raw query log consists of $19,296$ queries, from which we selected different subsets
for each of our experiments. We ran our optimization function on a sample of about $200$
queries representing $42$ query templates.
 We repeated
the experiments with different storage budgets for the stratified samples-- $50\%,
100\%$, and $200\%$.  A storage budget of $x\%$ indicates that the cumulative size
of all the samples will not exceed $\frac{x}{100}$ times the original data. So, for example,
a budget of $100\%$ indicates that the total size of all the samples should be less than
or equal to the original data.
Fig.~\ref{fig:storageconviva} shows the set of sample families that were selected by our
optimization problem for the storage budgets of $50\%, 100\%$ and $200\%$
respectively, along with their cumulative storage costs. Note that each stratified
sample family has a different size due to variable number of distinct keys in the
columns on which the sample is biased. Within each sample family, each
successive resolution is twice as large than the previous one and the value of
$K$ in the stratified sampling is set to $100,000$.

\vspace{.1in}
\noindent
\textbf{TPC-H Workload.}
We also ran a smaller number of experiments on TPC-H to
demonstrate the generality of our results, with respect to a standard benchmark. All the
TPC-H experiments ran on the same $100$ node cluster, on $1$ TB of data (\ie a scale
factor of $1000$). The $22$ benchmark queries in TPC-H
were mapped to $6$ unique query templates.
Fig.~\ref{fig:storagetpch} shows the set of sample families  selected by our
optimization problem for the storage budgets of $50\%, 100\%$ and $200\%$,
along with their cumulative storage costs.

Unless otherwise specified, all the experiments in this paper are done with a
$50\%$ additional storage budget (\ie samples could use an additional storage of
up to $50\%$ of the original data size).

\subsection{\systemheader{} vs. No Sampling}

We first compare the performance of \system{} versus frameworks that execute queries on
complete data.  In this experiment, we ran on two subsets of the Conviva data, with $7.5$ TB and
$2.5$ TB respectively, spread across $100$ machines. We chose these two subsets 
to demonstrate some key aspects of the interaction between data-parallel
frameworks and modern clusters with high-memory servers. While the smaller $2.5$ TB
dataset can be be completely cached in memory, datasets larger than $6$ TB in size have
to be (at least partially) spilled to disk. To demonstrate the significance of sampling even for
the simplest analytical queries, we ran a simple query that computed {\tt average} of user session
times with a filtering predicate on the date column ($dt$) and a {\tt GROUP BY} on the $city$ column.
We compared the response time of the full (accurate) execution of this query on  Hive~\cite{hive}
on Hadoop MapReduce~\cite{hadoopmr}, Hive on Spark (called Shark~\cite{shark}) -- both with
and without caching, against its (approximate) execution on \system~with a $1\%$ error bound for
each {\tt GROUP BY} key at $95\%$ confidence. We ran this query on both data sizes (\ie corresponding to
$5$ and $15$ days worth of logs, respectively) on the aforementioned $100$-node cluster. We
repeated each query $10$ times, and report the average response time in Figure~\ref{fig:qs-vs-hive-vs-shark}.
Note that the Y axis is log scale. In all cases, \system{} significantly outperforms its counterparts
(by a factor of $10-100\times$), because it is able to read far less data to compute a fairly accurate
answer.  For both data sizes,\eat{response times are just a few seconds in} \system{} returned the
answers in a few seconds as compared to thousands of seconds for others. In the $2.5$ TB run,
Shark's caching capabilities considerably help, bringing the query runtime down to about $112$
seconds. However,  with $7.5$ TB data size, a considerable portion of data is spilled to disk and the
overall query response time is considerably longer.

\begin{figure*}[ht]
\vspace{-.2in}
\centering
\subfigure[Error Comparison (Conviva)]{
\includegraphics*[width=160pt]{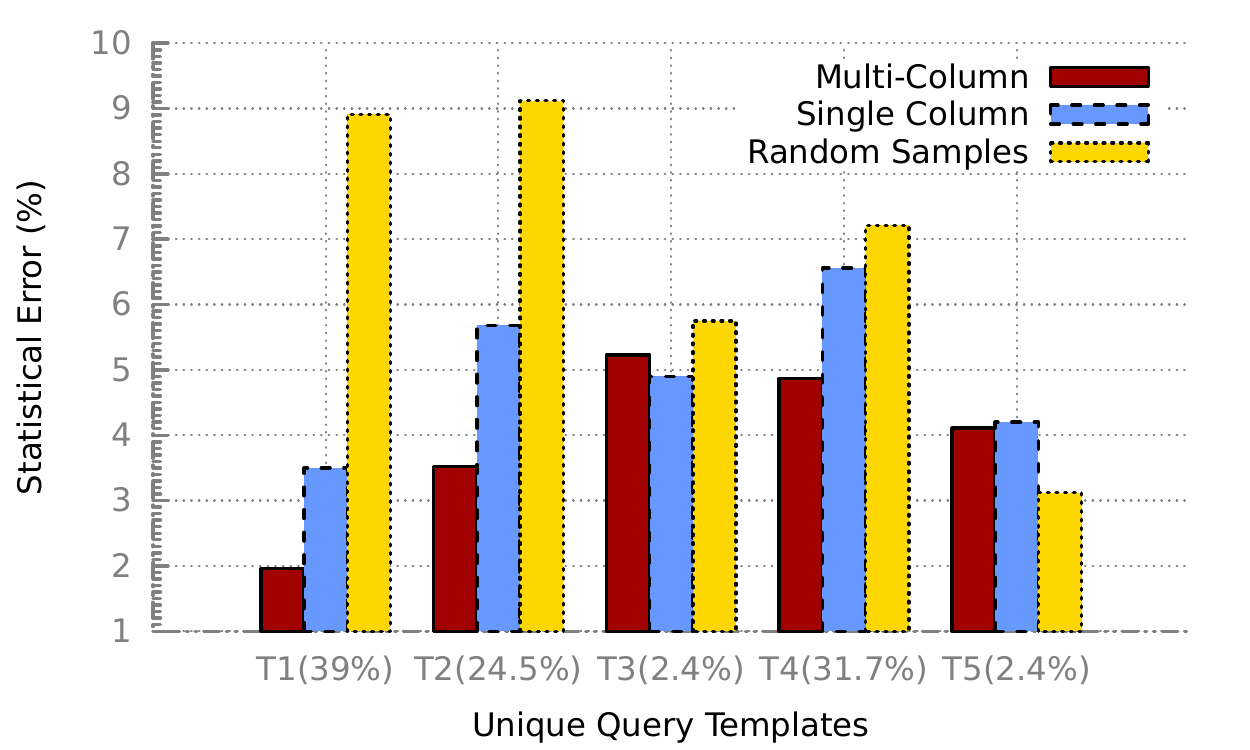}
	\label{fig:opt-conviva}
}
\subfigure[Error Comparison (TPC-H)]{
\includegraphics*[width=160pt]{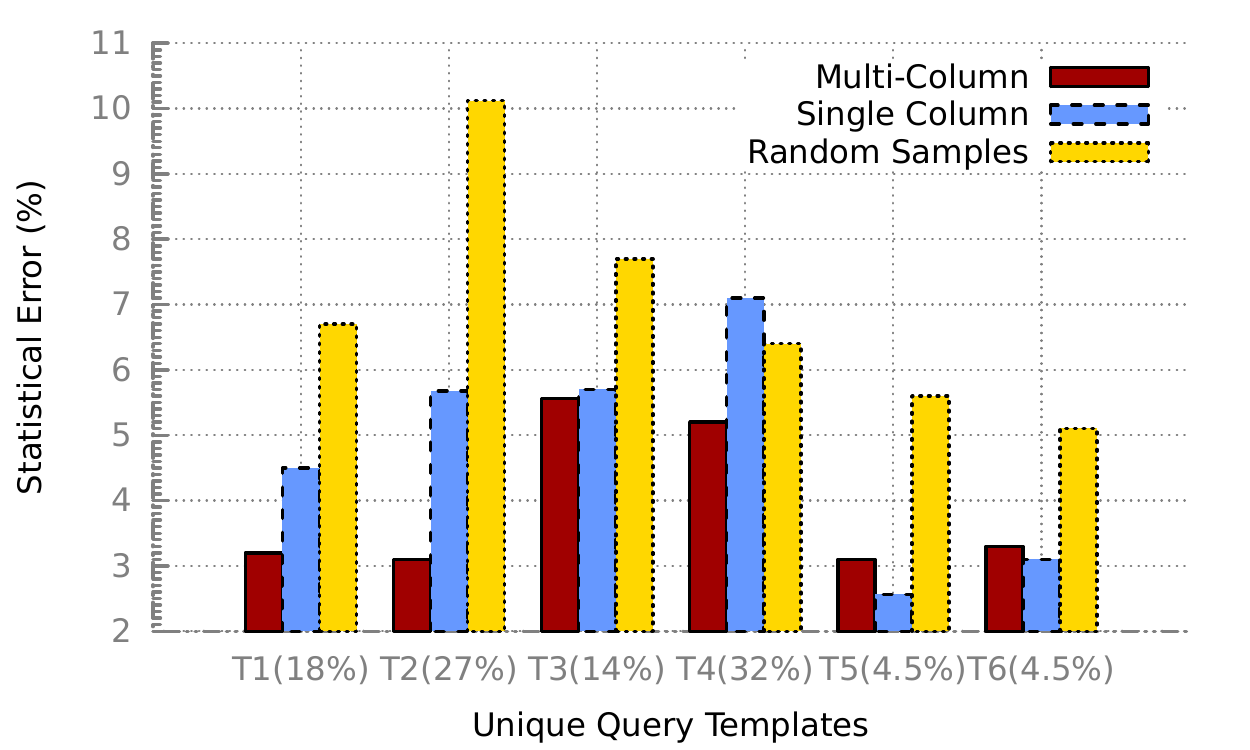}
	\label{fig:opt-tpch}
}
\subfigure[Error Convergence (Conviva) ]{
\includegraphics*[width=160pt]{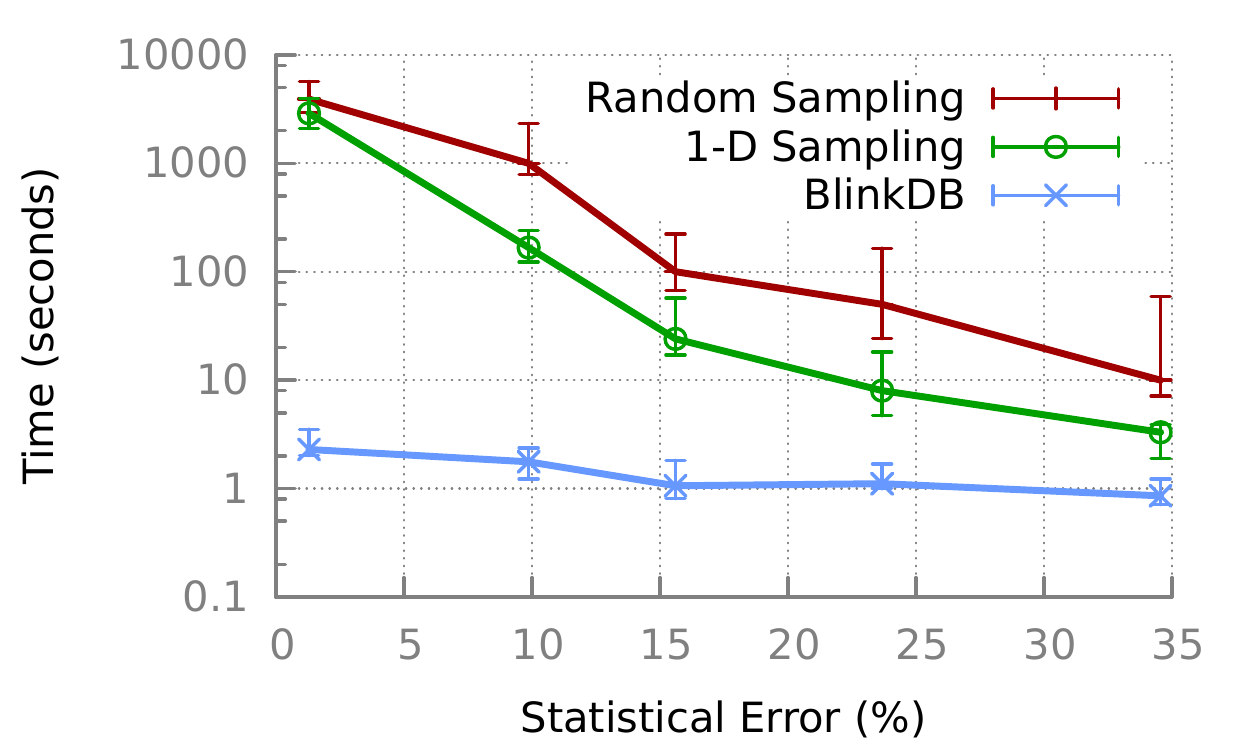}
\label{fig:ola-comparison}
}
\vspace{-.1in}
\caption[]{\ref{fig:opt-conviva} and~\ref{fig:opt-tpch} compare the average statistical
error per template when running a query with fixed time budget for various sets of
samples.~\ref{fig:ola-comparison} compares the rates of error convergence 
with respect to time for various sets of samples.} 
\label{fig:optproblem}
\vspace{-.1in}
\end{figure*}



\eat{
\begin{figure}[ht]
\begin{center}
\includegraphics*[width=225pt]{figures/qs-vs-hive-vs-shark.pdf}
\caption{A comparison of response times (in log scale) incurred by Hive (on Hadoop), Shark (Hive on Spark)--
both with and without full input data caching, and {\system}, on simple aggregation.}
\label{fig:qs-vs-hive-vs-shark}
\end{center}
\end{figure}
}

\subsection{Multi-Dimensional Stratified Sampling}
\label{sec:multi-dimensional-stratified-samples}
Next, we ran a set of experiments to evaluate the error~(\xref{sec:multi-dimension-sampling-exp})
and convergence~(\xref{sec:convergence-exp}) properties of our optimal multi-dimensional,
multi-granular stratified-sampling approach against both simple random
sampling, and one-dimensional stratified sampling (\ie stratified samples over a single column).
For these experiments we constructed three sets of samples on both Conviva and TPC-H
data with a $50\%$ storage constraint:

\begin{asparaenum}
\item \textbf{Multi-Dimensional Stratified Samples}. The sets of columns to stratify on were
chosen using \system{}'s optimization framework (\xref{sec:optimal-view-creation}), restricted
so that  samples could be stratified on no more than $3$ columns (considering four or more 
column combinations caused our optimizer to take more than a minute to complete).
\item \textbf{Single-Dimensional Stratified Samples}. The column to stratify on was chosen using 
the same optimization framework, restricted so a sample is stratified on exactly one column.
\item \textbf{Uniform Samples}. A  sample containing $50\%$ of the entire data, chosen uniformly
at random.
\end{asparaenum}

\subsubsection{Error Properties}
\label{sec:multi-dimension-sampling-exp}

In order to illustrate the advantages of our multi-dimensional stratified sampling strategy,
we compared the average statistical error at $95\%$ confidence while running a query
for $10$ seconds over the three sets of samples, all of which were constrained to be of the 
same size.

For our evaluation using Conviva's data we used a set of
$40$ queries (with $5$ unique query templates) and $17$
TB of uncompressed data on $100$ nodes. We ran a similar
set of experiments on the standard TPC-H queries. The queries
we chose were on the $lineitem$ table, and were modified to
conform with HiveQL syntax. 

In Figures~\ref{fig:opt-conviva}, and \ref{fig:opt-tpch}, we report results per-query template, with numbers
in parentheses indicating the percentage of queries with a given template.
For common query templates, multi-dimensional samples produce smaller statistical errors than either
one-dimensional or random samples. The optimization framework attempts to minimize expected error, rather than per-query 
errors, and therefore for some specific query templates single-dimensional stratified samples behave better than multi-dimensional samples.
Overall, however, our optimization framework significantly improves performance versus single column samples.

\subsubsection{Convergence Properties}
\label{sec:convergence-exp}
We also ran experiments to demonstrate the
convergence properties of multi-dimensional stratified samples used by \system{}.
We use the same set of three samples as \xref{sec:multi-dimensional-stratified-samples}, taken
over $17$ TB of Conviva data. Over this data, we ran multiple queries to calculate average session time
for a particular ISP's customers in $5$ US Cities and determined the latency for achieving a particular
error bound with $95\%$ confidence.  Results from this experiment (Figure~\ref{fig:ola-comparison}) show
that error bars from running queries over multi-dimensional samples converge orders-of-magnitude faster than
 random sampling, and are significantly faster to converge than single-dimensional stratified samples.
\eat{
\begin{figure}[ht]
\begin{center}
\includegraphics*[width=225pt]{figures/ola.pdf}
\caption{Comparison of Convergence Properties.}
\label{fig:ola-comparison}
\end{center}
\end{figure}
}

\subsection{Time/Accuracy Guarantees}
\label{sec:time-accuracy}
In this set of experiments, we evaluate \system{}'s effectiveness at meeting
different time/error bounds requested by the user.  To test time-bounded queries,  we picked a sample of $20$ Conviva queries, and ran each of them $10$ times, with a time bound from $1$ to $10$ seconds. Figure~\ref{fig:time-sla-bounds}  shows the results run on the same $17$ TB data set, where each bar represents the minimum, maximum and average response times of the $20$ queries, averaged over $10$ runs. 
t
From these results we can see that \system{}~is able to accurately select a sample to satisfy a target  response time.

Figure~\ref{fig:error-sla-bounds}  shows results from the same set of queries, also on the $17$ TB data set, evaluating our ability to meet specified error constraints. In this case, we varied the requested error bound from $2\%$ to $32\%$ . The bars  again represent the minimum, maximum and average errors across different runs of the queries. 
Note that the measured error is almost always at or less than the requested error. However, as we increase the error bound, the measured error becomes closer to the bound.  This is because at higher error rates the sample size is quite small and error bounds are  wider. 

\begin{figure*}[ht]
\vspace{-.2in}
\centering
\subfigure[Response Time Bounds]{
\includegraphics*[width=160pt]{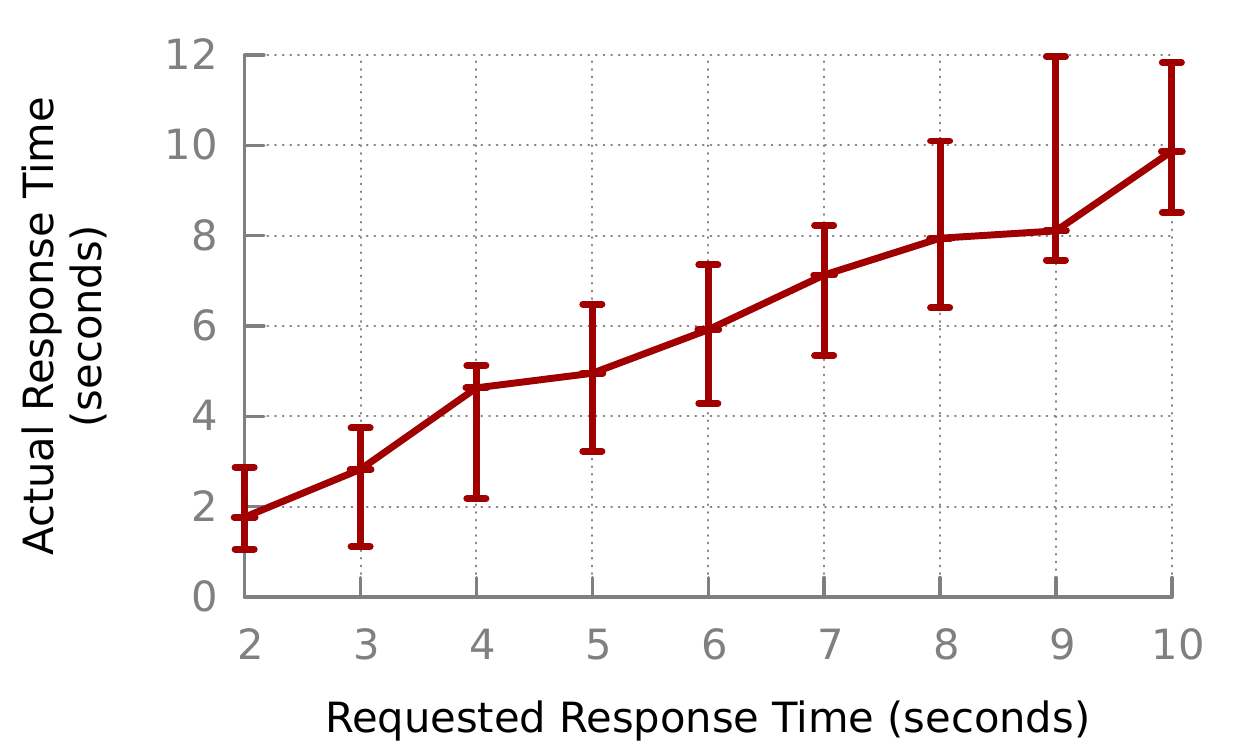}
	\label{fig:time-sla-bounds}
}
\subfigure[Relative Error Bounds]{
\includegraphics*[width=160pt]{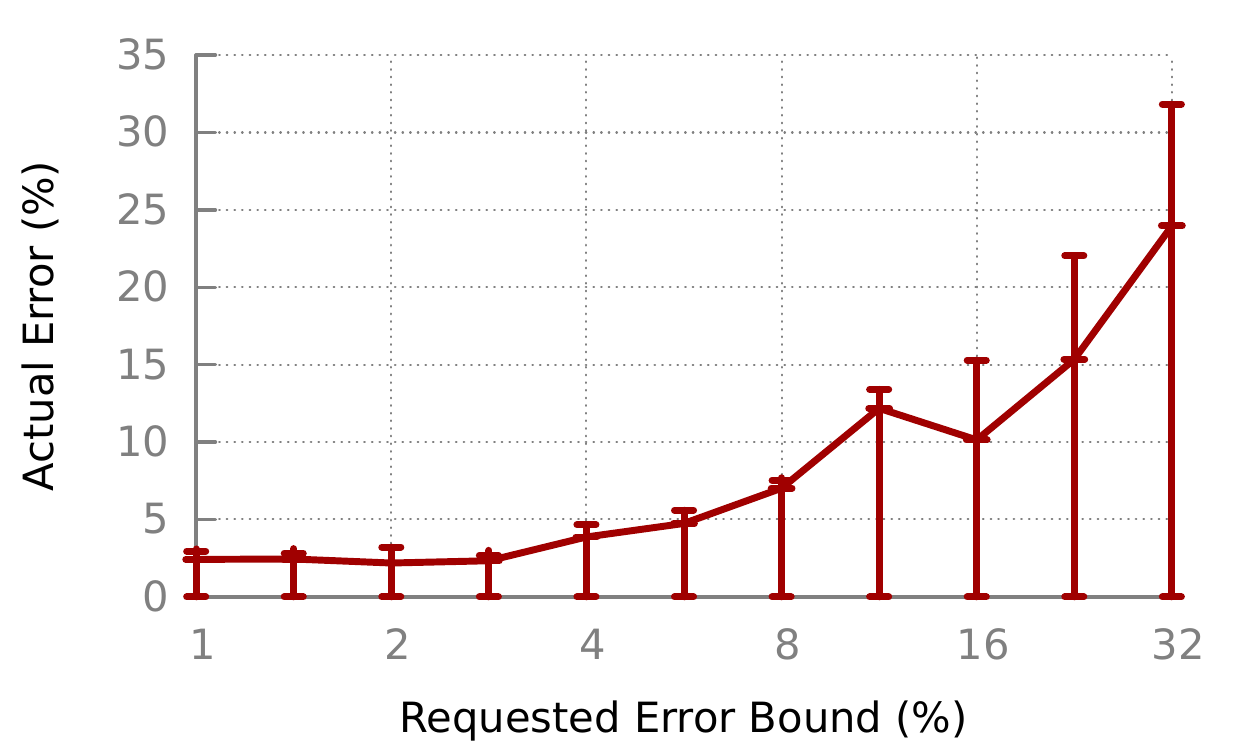}
	\label{fig:error-sla-bounds}
}
\subfigure[Scaleup]{
\includegraphics*[width=160pt]{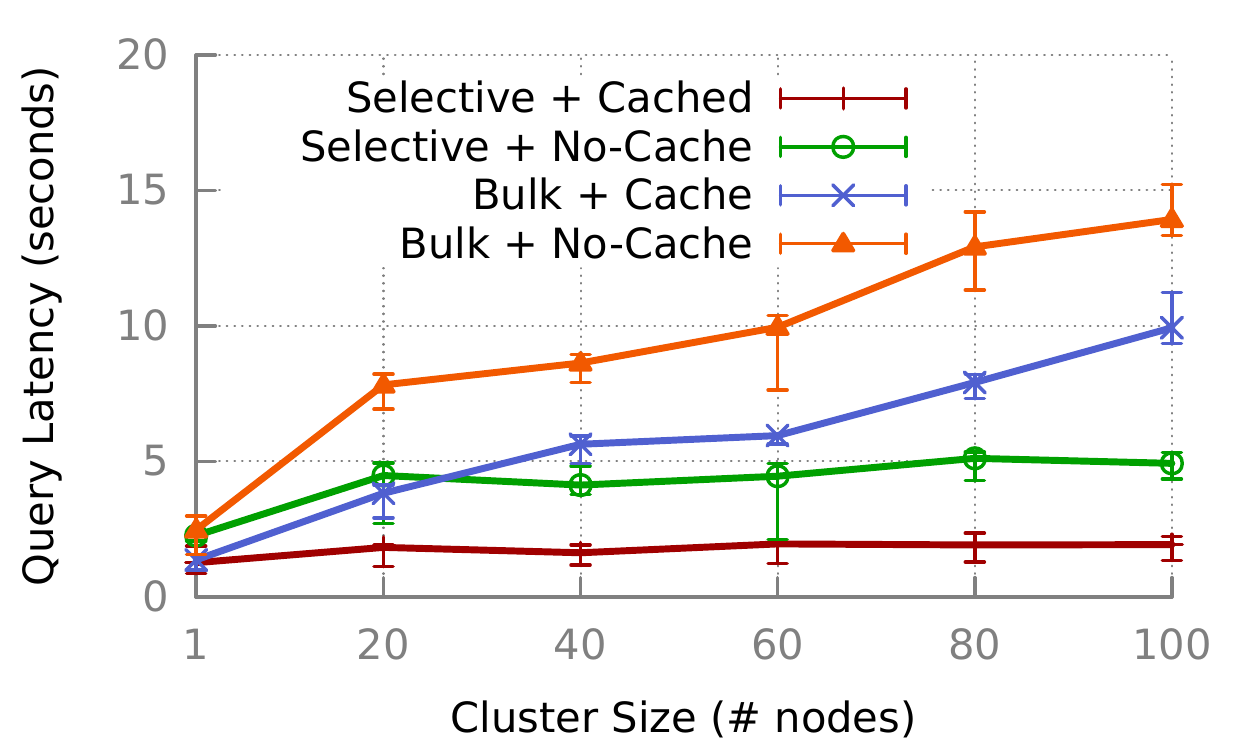}
\label{fig:scaleup}
}
\vspace{-.1in}
\caption[]{\ref{fig:time-sla-bounds} and~\ref{fig:error-sla-bounds} plot the actual vs.
requested response time and error bounds in {\system}.~\ref{fig:scaleup} plots the query
latency across $2$ different query workloads (with cached and non-cached samples)
as a function of cluster size}
\label{fig:bounds}
\vspace{-.1in}
\end{figure*}

\eat{
\begin{figure}[htbp]
\begin{center}
\includegraphics*[width=225pt]{figures/time-sla-bounds.pdf}
\caption{Actual vs. requested query response time in {\system}}
\label{fig:time-sla-bounds}
\end{center}
\end{figure}

\begin{figure}[t]
\begin{center}
\includegraphics*[width=225pt]{figures/error-sla-bounds.pdf}
\caption{Actual vs. requested query error bounds in {\system}}
\label{fig:error-sla-bounds}
\end{center}
\end{figure}
}

\eat{
Figure~\ref{fig:error-vs-time-micro-bench} shows how, for a simple average  operator running 
on the $17$ TB data set, the response time and error bounds varies as we vary the sample size. Other operators
also exhibit a very similar behavior. Note that while error bars are sufficiently wider for smaller samples, the
answers get fairly accurate as the sample size increases.

\eat{
\begin{figure}[t]
\begin{center}
\includegraphics*[width=225pt]{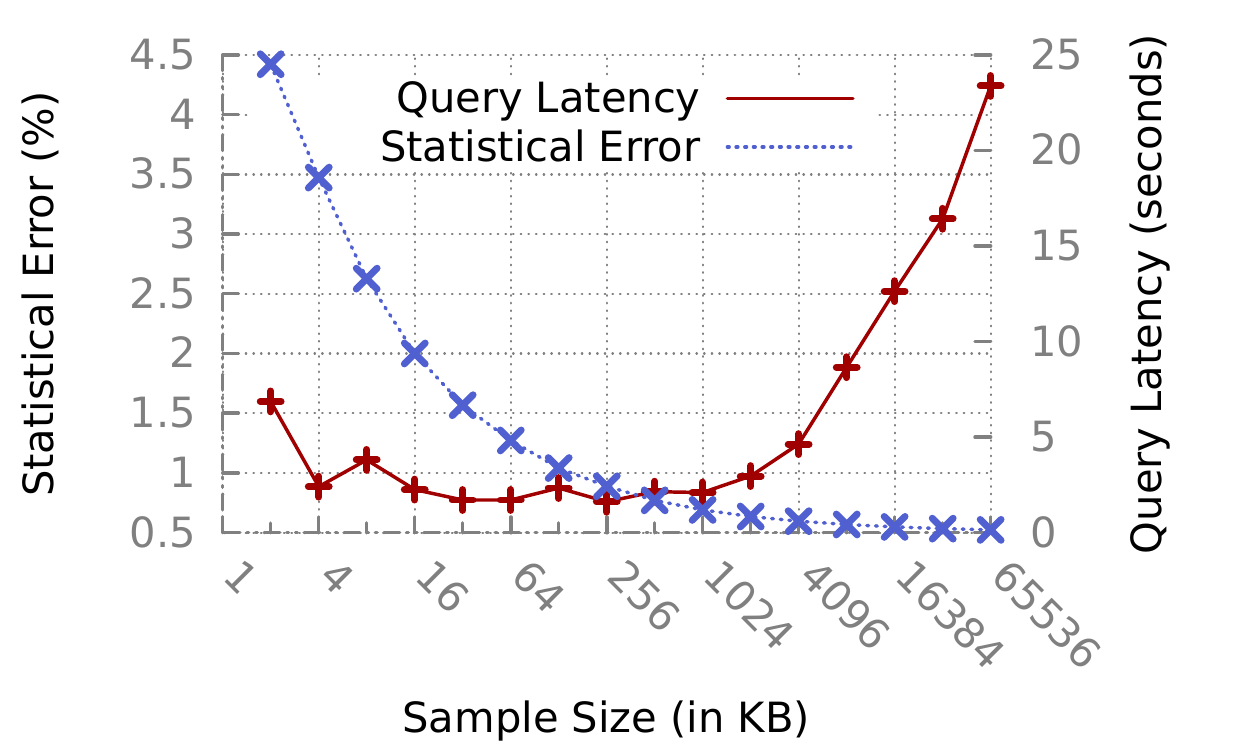}
\caption{This figure depicts the variation of statistical error and response time with respect to sample sizes.}
\label{fig:error-vs-time-micro-bench}
\end{center}
\end{figure}
}
}

\subsection{Scaling Up}
Finally, in order to evaluate the scalability properties of \system{} as a function of cluster size, we created $2$ different sets of query workload suites consisting of $40$ unique Conviva queries each. The first set (marked as $selective$) consists of highly selective queries -- i.e., those queries that only operate on a small fraction of input data. These queries occur frequently in production workloads and consist of one or more highly selective WHERE clauses. The second set (marked as $bulk$) consists of those queries that are intended to crunch huge amounts of data. While the former set's input is generally striped across a small number of machines, the latter set of queries generally runs on data stored on a large number of machines,  incurring a higher communication cost. Figure~\ref{fig:scaleup} plots the query latency for each of these workloads  as a function of cluster size. Each query operates on $100n$ GB of data (where $n$ is the cluster size). So for a $10$ node cluster, each query operates on $1$ TB of data and for a $100$ node cluster each query  operates on around $10$ TB of  data. Further, for each workload suite, we evaluate the query latency for the case when the required samples are completely cached in RAM or when they are stored entirely on disk. Since in reality any sample will likely partially reside both on disk and in memory these results indicate the min/max latency bounds for any query.

\eat{
\begin{figure}[htbp]
\begin{center}
\includegraphics*[width=225pt]{figures/scaleup.pdf}
\caption{Query latency across 2 different query workloads (with cached and non-cached samples) as a function of cluster size}
\label{fig:scaleup}
\end{center}
\end{figure}
}





\if{0}

\begin{figure}[htbp]
\begin{center}
\includegraphics*[width=225pt]{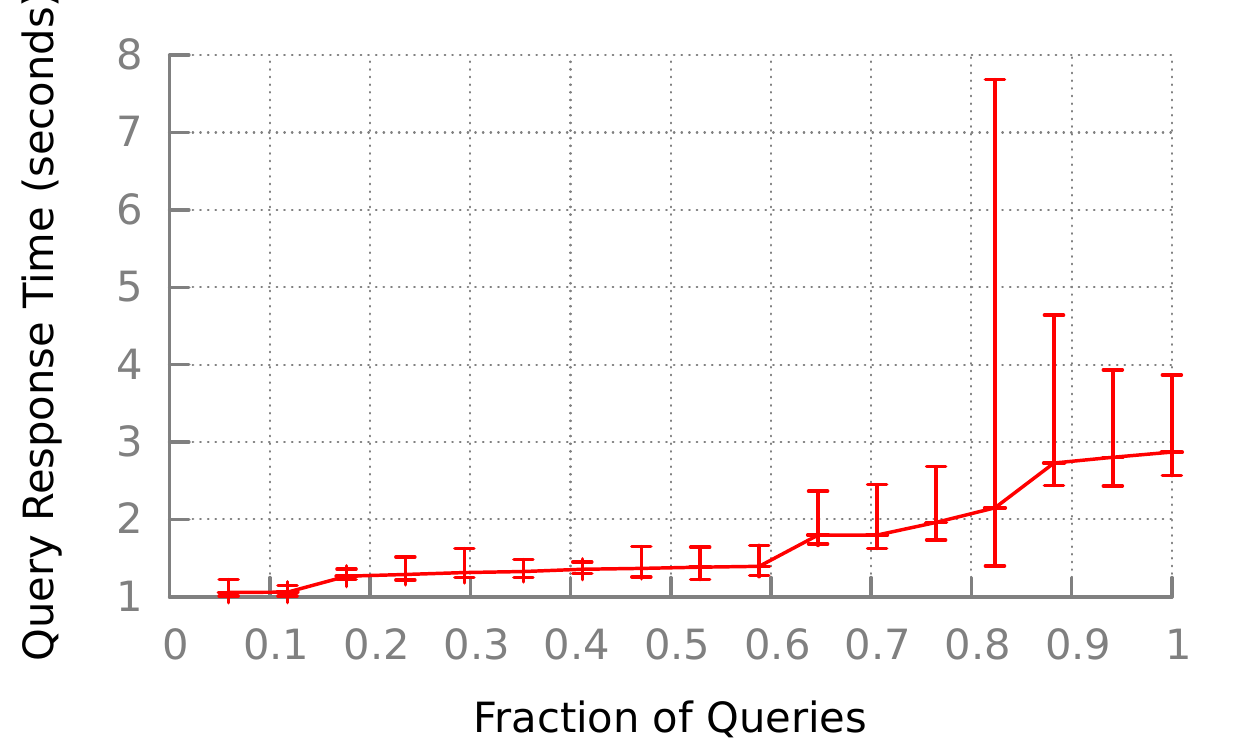}
\caption{A CDF of the response time of all queries in the work-suite ) took 20 queries ran them 10x each, with the target of 5\% error connecting the means \notesameer{Optional}}
\label{fig:query-response-time-cdf}
\end{center}
\end{figure}

\begin{figure}[htbp]
\begin{center}
\includegraphics*[width=225pt]{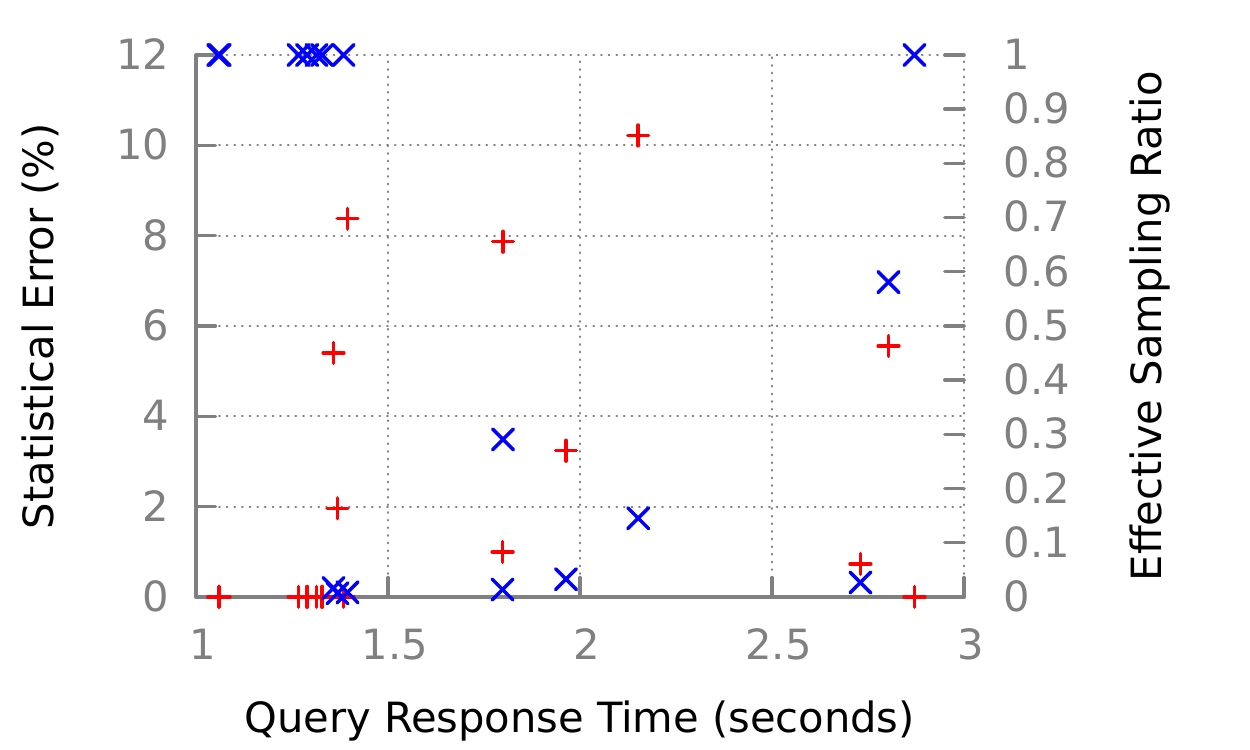}
\caption{(\notesameer{Optional}) This graph plots the statistical error ($+$) and effective sampling ratio ($X$) with respect to the query response times. This figure depicts our entire optimization space.}
\label{fig:statisticalerror-vs-samplingratio}
\end{center}
\end{figure}
\fi

\section{Related Work}
\label{related}
Prior work on interactive parallel query processing frameworks has broadly relied on two different sets
of ideas.

One set of related work has focused on using additional resources (\ie memory or CPU) 
to decrease query processing time. Examples include \emph{Spark}~\cite{spark},
\emph{Dremel}~\cite{dremel} and \emph{Shark}~\cite{shark}. While these systems deliver low-latency
response times when each node has to process a relatively small amount of data (\eg when the data can fit in the 
aggregate memory of the cluster), they become slower as the data grows unless new resources are constantly
being added in proportion.
Additionally, a significant portion of query execution time in these systems involves shuffling
or repartitioning massive amounts of data over the network, which is often
a bottleneck for queries.
By using samples, \system{} is able to scale better as the quantity of data grows.  Additionally,
being built on Spark, {\system} is able to effectively leverage the benefits provided by these
systems while using limited resources. 

Another line of work has focused on providing approximate answers with low latency, particularly in 
database systems. Approximate Query Processing (AQP) for decision support in relational
databases has been the subject of extensive research, and can either use samples, or other
non-sampling based approaches, which we describe below.

\textbf{Sampling Approaches.}  There has been substantial work on using
sampling to provide approximate responses, including work on
 stratified sampling techniques similar to
ours (see~\cite{aqp-survey} for an overview).  
Especially relevant are:

\begin{asparaenum}
\item \emph{STRAT}~\cite{surajit-optimized-stratified} relies on a single
stratified sample, chosen based on
the exact tuples accessed by each
query. In contrast \system~ uses a set of samples computed using query
templates, and is thus more amenable to ad-hoc
queries. 
\item \emph{SciBORQ}~\cite{sciborq} is a data-analytics framework designed for scientific
workloads, which uses special structures, called
\emph{impressions}. 
Impressions are biased samples where tuples are picked based on
past query results. SciBORQ targets exploratory
scientific analysis.  In contrast to \system,
SciBORQ only supports time-based constraints.
SciBORQ also does not provide any guarantees on the error margin.
\item {\it Babcock et al.}~\cite{babcock-dynamic} also describe a 
stratified sampling technique
where  biased samples are built on a single column,
in contrast to our multi-column approach. In their approach, 
queries are executed on all biased
samples whose biased column is present in the query and the
union of results is returned as the final answer.
Instead, \system~runs on a single sample, chosen based on
the current query.
\end{asparaenum}

\textbf{Online Aggregation.} Online Aggregation
(OLA)~\cite{online-agg} and its successors~\cite{online-agg-mr, ola-mr-pansare}
proposed the idea of providing approximate answers which are constantly
refined during query execution.  It provides users with an interface to stop execution
once a sufficiently good answer is found. The main disadvantages of Online
Aggregation is that it requires data to be streamed in a random
order, which can be impractical in distributed systems. While~\cite{ola-mr-pansare} proposes
some strategies for implementing OLA on Map-Reduce, their strategy involves significant changes to
the query processor. Furthermore, {\system}, unlike OLA, can store data
clustered by a primary key, or other attributes, and take advantage of
this ordering during data processing. Additionally \system~can use knowledge
about sample sizes to better allocate cluster resources (parallelism/memory) and leverage
standard distributed query optimization techniques~\cite{join-comparison-in-mr}.
\eat{
In order to provide
statistical guarantees at all stages of the query processing, OLA has
to access the data in random order.  
This is impractical when dealing with large amounts of data.  
Moreover, due to the use of stratified sampling, under the same time or error
constraints, \system{} is more likely to operate on values from the
long tail of a skewed distribution.  
Finally, Online Aggregation requires significant changes to the
query processor and user interface, 
requiring, for example, the ability to scan tables in random
error, or to maintain and output incremental answers. 
}

\textbf{Non-Sampling Approaches.}  There has been a great deal of work
on ``synopses'' for answering specific types of queries
(e.g., wavelets, histograms, sketches, etc.)\footnote{Please see~\cite{aqp-survey} for a survey}.
Similarly materialized views and data cubes can be constructed to answer
specific queries types efficiently. While offering fast responses, these techniques 
require specialized structures to be 
built for every operator, or in some cases for every type of query
and are hence impractical when processing arbitrary queries. Furthermore, 
these techniques are orthogonal to our
work, and \system~could be  modified to use
any of these techniques for better accuracy on certain types of queries, while
resorting to samples on others.

\section{Conclusion}\label{sec:conclusion}

In this paper, we presented~\system, a parallel, sampling-based approximate query
engine that provides support for ad-hoc queries with error and
response time constraints. \system~is based on two key ideas: (i) a
multi-dimensional, multi-granularity sampling strategy that builds and
maintains a large variety of samples, and (ii) a run-time dynamic sample
selection strategy that uses smaller samples to estimate query
selectivity and choose the best samples for satisfying query
constraints. Evaluation results on real data sets and on deployments
of up to $100$ nodes demonstrate the effectiveness of \system{} at handling
a variety of queries with diverse error and time constraints, allowing
us to answer a range of queries within $2$ seconds on $17$ TB of data
with 90-98\% accuracy.


\section*{Acknowledgements}
The authors would like to thank Anand Iyer, Joe Hellerstein, Michael Franklin, Surajit Chaudhuri and Srikanth Kandula for their invaluable feedback and suggestions
 throughout this project. 
The authors are also grateful to Ali Ghodsi, Matei Zaharia, Shivaram Venkataraman, Peter Bailis and Patrick Wendell for their comments on 
an early version of this manuscript.
Sameer Agarwal and Aurojit Panda are supported by the Qualcomm Innovation Fellowship during 2012-13. This research is also supported in part by gifts from Google, SAP, Amazon Web Services, Blue Goji, Cloudera, Ericsson, General Electric, Hewlett Packard, Huawei, IBM, Intel, MarkLogic, Microsoft, NEC Labs, NetApp, Oracle, Quanta, Splunk, VMware and by DARPA (contract \#FA8650-11-C-7136).

\let\oldthebibliography=\thebibliography
\let\endoldthebibliography=\endthebibliography
\renewenvironment{thebibliography}[1]{%
    \begin{oldthebibliography}{#1}%
      \setlength{\parskip}{0ex}%
      \setlength{\itemsep}{0ex}%
}%
{%
\end{oldthebibliography}%
}

{\small
\bibliographystyle{abbrv}
\bibliography{references}  
}

\appendix
\section{Stratified Sampling Properties and Storage Overhead}
\label{sec:analysis}

In this section prove the two properties stated in
Section~\ref{sec:stratified-samples} and give the storage overhead for
Zipf distribution.




\vskip 0.1in
\noindent
\textbf{Performance properties.} Recall that $S(\phi, K^{opt})$
represents the smallest possible stratified sample on $\phi$ that
satisfies the error or response time constraints of query $Q$, while
$S(\phi, K')$ is the closest sample in $SFam(\phi)$ that satisfies
$Q$'s constraints. Then, we have the following results.

\begin{lemma}
  Assume an I/O-bound query $Q$ that specifies an error
  constraint. Let $r$ be the response time of $Q$ when running on the
  optimal-sized sample, $S(\phi, K^{opt})$. Then, the response time of
  $Q$ when using sample family $SFam(\phi) = \{S(\phi, K_i)\}, (0 \leq
  i < m)$ is at most $c + 1/K^{opt}$ times larger than $r$.
\end{lemma}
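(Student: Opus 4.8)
The plan is to reduce the claim to a purely combinatorial statement about how the geometric sequence of caps $\{K_i\}$ straddles $K^{opt}$, and then convert that into a response-time bound via the I/O-bound assumption. First I would pin down how response time depends on the cap. Since $Q$ is I/O-bound and each $S(\phi,K_i)$ is stored contiguously by the values of $\phi$, the cost of answering $Q$ on $S(\phi,K)$ for a queried value $x$ is proportional to the rows it scans for $x$, namely $\min(F(\phi,T,x),K)$; in the regime where the error constraint actually binds (a capped value) this equals $K$, so response time is linear in the cap, $r(K)=\beta K$. Hence the ratio of interest, $r(K')/r(K^{opt})$, collapses to $K'/K^{opt}$, and the lemma becomes a statement about caps alone. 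I would also note that the ELP-probing runs on the smaller, \emph{nested} samples and, by the data-reuse property of \xref{sec:reusing}, is subsumed in the scan of $S(\phi,K')$, contributing no extra term.

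Next I would characterize $K'$. Because the standard deviation of each estimator scales like $1/\sqrt{K}$ (Table~\ref{tab:closedform}), meeting $Q$'s error constraint is monotone in $K$: a cap suffices iff $K \ge K^{opt}$. So $K'$ is exactly the smallest family cap that is at least $K^{opt}$; writing the caps as the decreasing sequence $K_i=\lfloor K_1/c^i\rfloor$, there is an index $j$ with $K_{j+1} < K^{opt} \le K_j = K'$.

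The spacing estimate is the crux. Ignoring floors, consecutive caps differ by exactly a factor $c$, so the nearest cap above $K^{opt}$ overshoots by at most $c$, giving $K'/K^{opt}\le c$. The only real work is absorbing the floor in $K_i=\lfloor K_1/c^i\rfloor$: combining $K_j \le K_1/c^j$ with $K^{opt} > K_{j+1} > K_1/c^{j+1}-1$, i.e. $K_1/c^j < c\,(K^{opt}+1)$, yields $K'=K_j < cK^{opt}+c$, and in the regime $K^{opt}\gg c$ this rounding slack is precisely the lower-order correction reported as the additive $1/K^{opt}$ in the bound $c + 1/K^{opt}$.

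I expect the main obstacle to be the bookkeeping around the floor functions (and the off-by-one in ``smallest cap $\ge K^{opt}$'') so that the stated additive constant comes out cleanly, rather than anything conceptually deep: the linear-in-$K$ response-time model and the monotonicity of error in $K$ are the two facts that make everything else routine. I would close by observing that the guarantee degrades gracefully to the clean factor-$c$ bound as $K^{opt}\to\infty$.
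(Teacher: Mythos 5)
Your proposal follows essentially the same route as the paper's own proof: straddle $K^{opt}$ between consecutive caps, $\lfloor K_1/c^{i+1}\rfloor < K^{opt} \le \lfloor K_1/c^{i}\rfloor$, invoke monotonicity of the error in the sample size to identify $K'$ as the smallest family cap at least $K^{opt}$, and then convert the cap ratio into a response-time ratio using the fact that rows sharing a value of $\phi$ are stored contiguously and the query is I/O-bound (the paper phrases this last step as ``the frequency of any selected value in $S(\phi,K')$ is at most $K'/K^{opt}$ times its frequency in $S(\phi,K^{opt})$,'' which is slightly more careful than your $r(K)=\beta K$ model since it also covers selected values that are uncapped in both samples, where the ratio is simply $1$). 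The one quantitative slip is in your floor bookkeeping: $K' < c\,K^{opt}+c$ gives a ratio of $c + c/K^{opt}$, which is \emph{not} ``precisely'' the stated $c + 1/K^{opt}$ when $c>1$; however, since the caps (and hence $K^{opt}$) are integers, the strict inequality $\lfloor K_1/c^{i+1}\rfloor < K^{opt}$ upgrades to $\lfloor K_1/c^{i+1}\rfloor \le K^{opt}-1$, whence $K_1/c^{i+1} < K^{opt}$ and $K' \le K_1/c^{i} < c\,K^{opt}$, recovering (indeed improving on) the stated bound. The paper's own write-up is equally loose at exactly this point, so this is a cosmetic fix rather than a conceptual gap.
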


\begin{proof}
 Let $i$ be such that

\begin{equation}
\left\lfloor \frac{K}{c^{i+1}} \right\rfloor < K^{opt} \leq \left\lfloor \frac{K}{c^{i}} \right\rfloor.
\label{eq:query-error-1}
\end{equation}

\noindent 
Assuming that the error of $Q$ decreases monotonically with the
increase in the sample size, $S(\phi, \lfloor K/c^{i} \rfloor)$ is the
smallest sample in $SFam(\phi)$ that satisfies $Q$'s error
constraint. Furthermore, from Eq.~(\ref{eq:query-error-1}) it follows
that

\begin{equation}
\left\lfloor \frac{K}{c^{i+1}} \right\rfloor < c K^{opt}  + 1.
\label{eq:query-error-bounds}
\end{equation}

\noindent
In other words, in the worst case, $Q$ may have to use a sample whose
cap is at most $c + 1/K^{opt}$ times larger than $K^{opt}$. Let $K' =
c K^{opt} + 1$, and let $A = \{a_1, a_2, ..., a_k\}$ be the set of
values in $\phi$ selected by $Q$. By construction, both samples
$S(\phi, K')$ and $S(\phi, K^{opt})$ contain all values in the fact
table, and therefore in set $A$. Then, from the definition of the
stratified sample, it follows that the frequency of any $a_i \in A$ in
sample $S(\phi, K')$ is at most $K'/K^{opt}$ times larger than the
frequency of $a_i$ in $S(\phi, K^{opt})$. Since the tuples matching
the same value $a_i$ are clustered together in both samples, they are
accessed sequentially on the disk. Thus, the access time of all tuples
matching $a_i$ in $S(\phi, K')$ is at most $c + 1/K^{opt}$ times
larger than the access time of the same tuples in $S(\phi,
K^{opt})$. Finally, since we assume that the $Q$'s execution is
I/O-bound, it follows that $Q$'s response time is at most $c +
1/K^{opt}$ times worse than $Q$'s response time on the optimal sample,
$S(\phi, K^{opt})$.
\end{proof}



\begin{lemma}
  Assume a query, $Q$, that specifies a response time constraint, and
  let $S(\phi, K^{opt})$ be the largest stratified sample on column
  set $\phi$ that meets $Q$'s constraint. Assume standard deviation of
  $Q$ is $\sim 1/\sqrt{n}$, where $n$ is the number of tuples selected
  by $Q$ from $S(\phi, K^{opt})$. Then, the standard deviation of $Q$
  when using sample family $SFam(\phi)$ increases by at most
  $1/\sqrt{1/c - 1/K^{opt}}$ times.
\end{lemma}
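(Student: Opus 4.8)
The plan is to mirror the structure of the preceding lemma, but with the roles of ``smallest sample meeting an error bound'' and ``largest sample meeting a time bound'' interchanged, and to track how the number of \emph{selected} tuples (rather than the I/O time) degrades. First I would pin down which sample in $SFam(\phi)$ is actually used. Since response time is monotone in sample size, meeting the time constraint forces us to take the largest family cap not exceeding $K^{opt}$. Writing the family caps as $\lfloor K/c^i\rfloor$, I would choose the index $i$ with
\[
\left\lfloor \frac{K}{c^{i}}\right\rfloor \leq K^{opt} < \left\lfloor \frac{K}{c^{i-1}}\right\rfloor,
\]
and set $K' = \lfloor K/c^{i}\rfloor$, so that $S(\phi, K')$ is the sample on which $Q$ runs.

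The central step, which I expect to be the main obstacle, is to lower-bound the number of tuples $n'$ that $Q$ selects from $S(\phi, K')$ in terms of the number $n$ it selects from $S(\phi, K^{opt})$. Let $A = \{a_1,\ldots,a_k\}$ be the values in $\phi$ picked out by $Q$'s predicate, and let $F(a_j)$ be the original frequency of $a_j$. The contribution of $a_j$ to the selected count is $\min(F(a_j), K')$ in $S(\phi, K')$ and $\min(F(a_j), K^{opt})$ in $S(\phi, K^{opt})$. A short case analysis on whether $F(a_j)$ lies below $K'$, between $K'$ and $K^{opt}$, or above $K^{opt}$ shows that the per-value ratio is in every regime at least $K'/K^{opt}$; summing over $A$ then gives $n' \geq n\,K'/K^{opt}$. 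This is the delicate piece, because the cap-ratio bound must hold uniformly across all three frequency regimes, not merely on average.

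Given the scaling assumption that the standard deviation behaves as $\sim 1/\sqrt{n}$, the ratio of standard deviations is controlled directly by the selected-tuple counts:
\[
\frac{\sigma'}{\sigma_{opt}} = \sqrt{\frac{n}{n'}} \leq \sqrt{\frac{K^{opt}}{K'}}.
\]
It then remains to convert the cap ratio $K^{opt}/K'$ into the stated form using the floor inequalities from the sandwich above. From $K^{opt} < \lfloor K/c^{i-1}\rfloor \leq K/c^{i-1}$ I would divide by $c$ to get $K^{opt}/c < K/c^{i}$, and since $K' = \lfloor K/c^{i}\rfloor > K/c^{i} - 1$, this yields $K' > K^{opt}/c - 1 = (K^{opt}-c)/c$.

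Substituting this lower bound on $K'$ closes the argument:
\[
\frac{\sigma'}{\sigma_{opt}} \leq \sqrt{\frac{K^{opt}}{K'}} < \sqrt{\frac{c\,K^{opt}}{K^{opt}-c}} = \frac{1}{\sqrt{1/c - 1/K^{opt}}},
\]
which is exactly the claimed bound. As a sanity check, this quantity tends to $\sqrt{c}$ as $K^{opt}\to\infty$, recovering the ``within a factor of $\sqrt{c}$'' statement of Property~2 in the main text; the $1/K^{opt}$ term is precisely the finite-size correction coming from the floor operations and the discrete frequency caps.
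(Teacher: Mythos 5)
Your proof is correct and follows essentially the same route as the paper's: the same sandwich $\lfloor K/c^{i}\rfloor \leq K^{opt} < \lfloor K/c^{i-1}\rfloor$, the same floor inequality $K' > K^{opt}/c - 1$, and the same algebraic conversion to the bound $1/\sqrt{1/c - 1/K^{opt}}$. The one place you go beyond the paper is the per-value case analysis giving $n' \geq n\,K'/K^{opt}$: the paper simply \emph{assumes} that the number of tuples selected is proportional to the sample cap, whereas you prove the one-sided inequality that is actually needed (via $\min(F(a_j),K')$ versus $\min(F(a_j),K^{opt})$ in the three frequency regimes), a modest but genuine tightening of the same argument.
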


\begin{proof}
Let $i$ be such that

\begin{equation}
\left\lfloor \frac{K}{c^{i}} \right\rfloor \leq K^{opt} < \left\lfloor \frac{K}{c^{i-1}} \right\rfloor.
\label{eq:query-time-1}
\end{equation}

\noindent Assuming that the response time of $Q$ decreases
monotonically with the sample size, $S(\phi, \lfloor K/c^{i} \rfloor)$
is the largest sample in $SFam(\phi)$ that satisfies $Q$'s response
time. Furthermore, from Eq.~(\ref{eq:query-time-1}) it follows
that

\begin{equation}
  \label{eq:query-time-bounds}
\left\lfloor \frac{K}{c^{i}} \right\rfloor > \frac{K^{opt}}{c} - 1.
\end{equation}

\noindent Assuming that the number of tuples selected by $Q$ is
proportional to the sample size, the standard deviation of running $Q$
on $S(\phi, \lfloor K/c^{i} \rfloor )$ increases by at most $1/\sqrt{1/{c} -
1/K^{opt})}$ times.
\end{proof}



\noindent
\textbf{\bf Storage Overhead for Zipf distribution.} We evaluate the
storage overhead of maintaining a stratified sample, $S(\phi, K)$, for
a Zipf distribution, one of the most popular heavy tail distributions
for real-world datasets. Without loss of generality, assume $F(\phi,
T, x) = M/rank(x)^{s}$, where $rank(x)$ represents the rank of $x$ in
$F(\phi, T, x)$ (\ie value $x$ with the highest frequency has rank
$1$), and $s \geq 1$. Table~\ref{tab:zipf-overhead} shows the overhead
of $S(\phi, K)$ as a percentage of the original table size for various
values of Zipf's exponent, $s$, and for various values of $K$. The
number of unique values in the original table size is $M = 10^9$. For
$s = 1.5$ the storage required by $S(\phi, K)$ is only $2.4\%$ of the
original table for $K=10^4$, $5.2\%$ for $K=10^5$, and $11.4\%$ for
$K=10^6$.

\begin{table}[htbp]
{\small
\begin{center}
  \begin{tabular}{|c|r|r|r|}
    \hline
    $s$ & $K = 10,000$ & $K = 100,000$ & $K = 1,000,000$\\ \hline\hline
      $1.0$ & $0.49$ & $0.58$ & $0.69$ \\\hline
      $1.1$ & $0.25$ & $0.35$ & $0.48$ \\\hline
      $1.2$ & $0.13$ & $0.21$ & $0.32$ \\\hline
      $1.3$ & $0.07$ & $0.13$ & $0.22$ \\\hline
      $1.4$ & $0.04$ & $0.08$ & $0.15$ \\\hline
      $1.5$ & $0.024$ & $0.052$ & $0.114$ \\\hline
      $1.6$ & $0.015$ & $0.036$ & $0.087$ \\\hline
      $1.7$ & $0.010$ & $0.026$ & $0.069$ \\\hline
      $1.8$ & $0.007$ & $0.020$ & $0.055$ \\\hline
      $1.9$ & $0.005$ & $0.015$ & $0.045$ \\\hline
      $2.0$ & $0.0038$ & $0.012$ & $0.038$ \\\hline
\end{tabular}
\end{center}
}
\caption{The storage required to maintain sample $S(\phi, K)$ as a
  fraction of the original table size. The distribution of $\phi$ is
  Zipf with exponent $s$, and the highest frequency ($M$) of $10^9$. }
\label{tab:zipf-overhead}
\end{table}

\eat{
\begin{figure}[htbp]
\begin{center}
\includegraphics*[width=150pt]{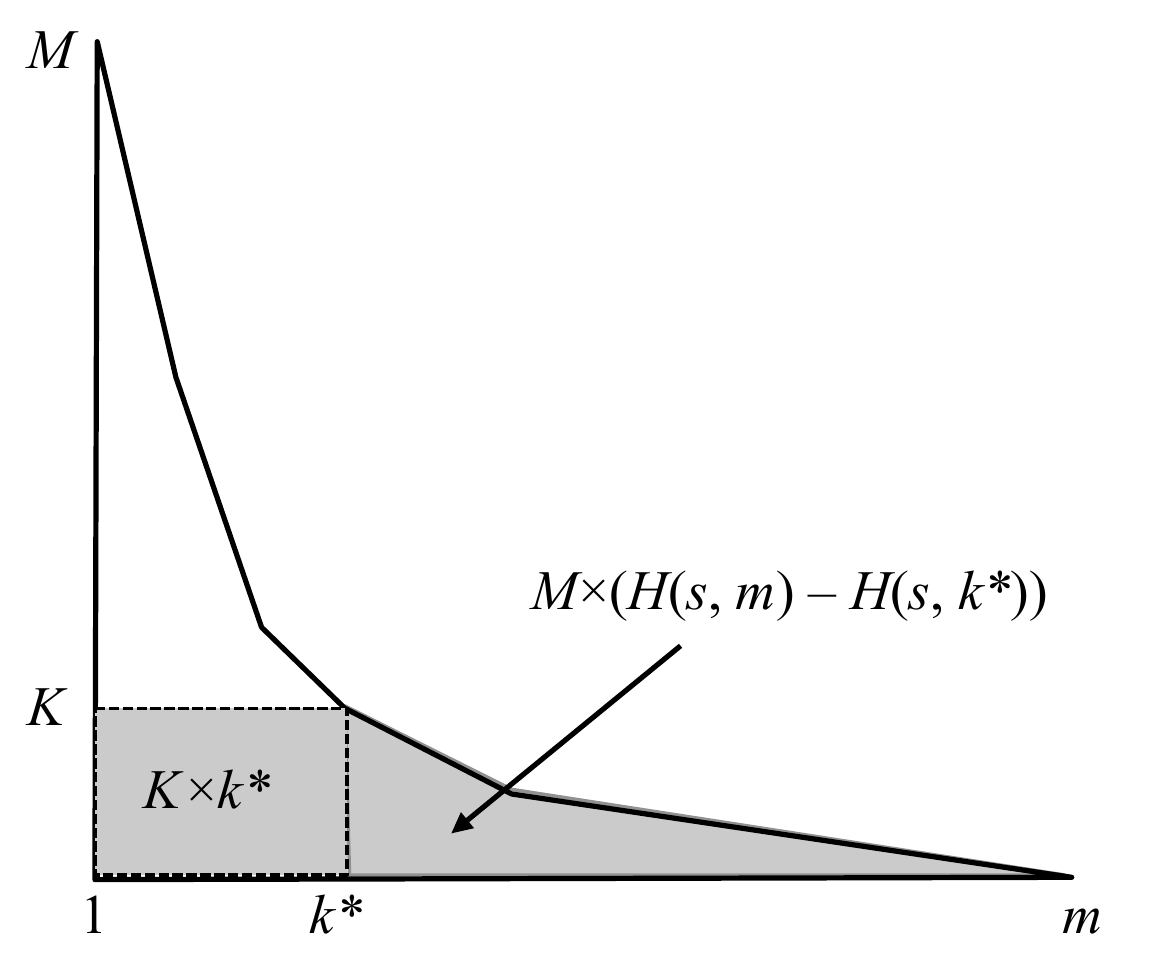}
\caption{The computation of the cardinality of a $K$-sample for a Zipf
  distribution. The stratified sample is represented by the gray
  areas. The frequency of $k^{*}$ is $K$, while the frequency of $m$
  is $1$.}
\label{fig:zipf-example}
\end{center}
\end{figure}

}

\eat{
Note that $M$ represents the highest frequency, and the lowest
frequency is $1$. Then, the cardinality of $F(\phi, T, x)$ (\ie the
number of tuples in the original table) is

\begin{equation}
 H(s, m) = \sum_{x=1}^{m} \frac{1}{x^{s}},
\end{equation}

\noindent
where $H(s, m)$ is the generalized harmonic mean, and $m = M^{1/s}$.

Let $k^{*} = (M/K)^{1/s}$, that is, the frequency of $k^{*}$ is
$K$. Then, the cardinality of $S(\phi, K)$, is $K \times k^{*} + M
\times (H(s, m) - H(s, k^{*}))$, as shown in
Figure~\ref{fig:zipf-example}. Thus, the ratio of $S(\phi, K)$'s
cardinality to the cardinality of $T$ is:

\begin{equation}
  R(s, M, K) =  \frac{K \times k^{*} + M \times (H(s, m) - H(s, k^{*}))}{M \times H(s, m)}
\end{equation}
}




\end{document}